\theoremstyle{plain}
\newtheorem{theorem}{Theorem}
\newtheorem{lemma}[theorem]{Lemma}
\newtheorem{proposition}[theorem]{Proposition}
\newtheorem{definition}[theorem]{Definition}
\newtheorem{example}[theorem]{Example}
\newtheorem{remark}[theorem]{Remark}
\newenvironment{proof}{\noindent{\bf Proof:\/}}{$\Box$\vskip 0.1in}
 \def\B{{\tilde{\delta}}}
 \def\tdist{{\rm dist}_{\sim}}
\begin{document}

\markboth{Anselmo, Castiglione, Flores, Giammarresi, Madonia, Mantaci}
{  Isometric Words based on Swap and Mismatch Distance}

\title{\Large Characterization of Isometric Words
{based on} 
\\Swap and {Mismatch} Distance		
}
\author{Author One$^1$\thanks{Author One was partially supported by Grant XXX} \and Author Two$^2$ \and Author Three$^1$}

\date{
	$^1$Organization 1 \\ \texttt{\{auth1, auth3\}@org1.edu}\\%
	$^2$Organization 2 \\ \texttt{auth3@inst2.edu}\\[2ex]%
}

\author{ M. Anselmo$^1$ \and  G. Castiglione$^2$ \and  M. Flores$^{2}$ \and   D. Giammarresi$^3$ \and   M. Madonia$^4$ \and S. Mantaci$^2$
  }%

  \date{	
$^1${ \normalsize Dipartimento di Informatica,
Universit\`a di Salerno, 
Italy. 
\\
{\tt
\{manselmo\}@unisa.it}} \\
$^2${ \normalsize Dipartimento di Matematica e Informatica, Universit\`a di Palermo, 
Italy  
{\tt \{giuseppa.castiglione, sabrina.mantaci, manuela.flores\}@unipa.it }} \\
$^3${ \normalsize Dipartimento di Matematica.
 Universit\`a  Roma ``Tor Vergata''
  Italy. 
{\tt giammarr@mat.uniroma2.it}}\\
$^4${\normalsize Dipartimento di Matematica eInformatica, Universit\`a  di
  Catania, 
  Italy.
  {\tt madonia@dmi.unict.it}} 
}

 \maketitle

\begin{abstract}
In this paper we consider an edit distance with swap and mismatch operations, called  {\em tilde-distance}, and introduce the corresponding definition of {\em tilde-isometric} word. 
Isometric words are classically defined with respect to Hamming distance and combine the notion of edit distance  with the property that a word does not appear as factor in other words.
A word $f$ is said tilde-isometric if, for any pair of $f$-free words $u$ and $v$, there exists a transformation from $u$ to $v$ via the related edit operations such that all the intermediate words are also $f$-free. This new setting is here studied giving a full characterization of the tilde-isometric words in terms of overlaps with errors.
\end{abstract}

{
{\bf Keywords:} {Swap and mismatch distance; Isometric words; Overlap with errors.}
}
\section{Introduction}\label{s-intro} 
The notion of edit distance  plays a crucial role in defining combinatorial properties of families of words as well as in designing many classical string algorithms that find applications in natural language processing, bioinformatics and, in general, in information retrieval problems. The edit distance is a string metric that quantifies how much two strings differ from each other and it is based on counting the minimum number of edit operations required to transform one string into the other one.

Changing the set of  the operations will give rise to   a different definition of edit distance.  The operations of insertion, deletion and replacement of a character in the string characterize the Levenshtein distance which is probably the most widely known (cf. \cite{lev66}). On the other hand, the most basic one is the Hamming distance which applies only to pair of strings having equal length and counts the positions where they have a mismatch; this corresponds to the restriction of using only the replacement operation. 
The Hamming distance finds a direct application in detecting and correcting errors in strings and it plays a main role   in the algorithms for string matching with mismatches (see \cite{galil90}).

The notion of isometric word  combines the notion of edit distance  with the property that a word does not appear as factor in other words.  Note that this property is important in combinatorics as well as in the investigation on similarities of DNA sequences, where the avoided factor is referred to as an absent word \cite{BMR96,CMR20,CCFMP,EGMRS07}. 
 
 Isometric words based on Hamming distance were first introduced 
 in \cite{IlicK12}
as special binary strings that never appear as factors in some string transformations. 
A string is $f$-{free} if it does not contain $f$ as factor. A word $f$ is isometric if for any pair of $f$-free words $u$ and $v$, there exists a sequence of symbol replacement operations that transforms $u$ into $v$, where all the intermediate words are also $f$-free.
In this original definition, 
isometric words are strictly related to and motivated by  the special isometric subgraphs of the hypercubes, called generalized Fibonacci cubes.
The hypercube 
$Q_n$ is a graph whose vertices are the (binary) words of length $n$, and two vertices are adjacent when the corresponding words differ in exactly one symbol. Therefore, the distance between two vertices  in $Q_n$ is the Hamming distance of the corresponding vertex-words. The generalized Fibonacci cube $Q_n(f)$ is  the subgraph of $Q_n$ which contains only vertices that are $f$-free where $f$ is an isometric word. This implies that the distances of the vertices in $Q_n(f)$ are the same as calculated in the whole $Q_n$, that is $Q_n(f)$ is isometric to $Q_n$. Fibonacci cubes have been introduced by Hsu in \cite{Hsu93} and correspond to the case with $f=11$. 
In \cite{IlicK12,KlavzarS12,Wei17,WeiYZ19,Wei16} the structure of non-isometric words for alphabets of size 2 and Hamming distance is completely characterized and related to particular properties on their overlaps. The more general case of alphabets of size greater than 2 and Lee distance is studied in \cite{AFMWords21,AnselmoFM22,MMM22,2024-AFM-Computa,AFMIctcs23}.  
Using these characterizations, in \cite{BealC22} some linear-time algorithms are given to check whether a binary word is Hamming isometric and, for quaternary words, if it is Lee isometric. 
 These algorithms were extended to provide further information on  non-isometric words, still keeping linear complexity in \cite{MMM22}.
Binary Hamming isometric two-dimensional words have been also studied in \cite{AGMS-DCFS20}.

Many challenging problems in correcting errors in strings come from computational biology. Among the chromosomal operations on DNA sequences, in gene mutations and duplication, it seems natural to consider  the {\em swap} operation, consisting in exchanging two adjacent symbols. 
The  Damerau-Levenshtein distance  adds also the swap to all edit operations. In \cite{Wagner75}, Wagner proves that computing the edit distance with insertion, deletion, replacement, and swap, is polynomially solvable in some restriction of the problem. The 
swap-matching problem has been considered in \cite{AmirCHLP03,FaroP18,EFMMPS23,AnselmoCFGMM23,ACFGMMDcfs23,CastGiamFlo23}, and related algorithms are given in \cite{AmirEP06,DombbLPPT10}.

In this paper, we study the notion of binary isometric word using the edit distance based on swaps and mismatches. This distance will be here called {\em tilde-distance} by using the $\sim$ symbol  that somehow evokes the swap operation. 
The tilde-distance ${\rm dist}_\sim(u,v)$ of
  equal-length words $u$ and $v$ is the minimum number of replacement and swap operations needed to transform $u$ into $v$. 
 
 It turns out that the addition of the swap operation to the definition makes the situation 
more complex, {although interesting for  applications}. It is not a mere generalization of the Hamming case  since special situations arise. A swap operation in fact is equivalent to two replacements, but it counts as one when  computing the tilde-distance. Moreover, there could be different ways to transform $u$ into $v$ since particular triples of consecutive symbols can be managed, from left to right, either by a swap and then a replacement or by a replacement and then a swap. Furthermore
there is also the possibility of making two  swaps in consecutive positions that are in fact equivalent to two replacement operations; in this case there is a position whose symbol   is changed twice. This fact is the major difference  with the Hamming distance where all the  transformations from $u$ to $v$ may differ only in  the order in which the replacemnt operations are applyed to the  positions to be changed. This difference will spawn new scenario when dealing with tilde-isometricity.

The definition of {\em  tilde-isometric} word comes in a very natural way. A word $f$ is tilde-isometric if for any pair of equal-length words $u$ and $v$ that are $f$-free, there is a tilde-transformation from $u$ to $v$ that uses exactly dist$_\sim(u,v)$ replacement and swap operations and such that all the intermediate words still avoid $f$. As expected, the notions is not a mere generalization of the classic case. In fact we exibit 
  some examples of tilde-isometric words that are not Hamming isometric and vice versa.

  The main result of this paper consists in a complete characterization of the words that are not  tilde-isometric   in terms of special configurations in their overlap. Note that, in order to prove that a given string $f$ is not tilde-isometric one should exhibit a pair of $f$-free words $(\tilde\alpha, \tilde\beta)$ such that any tilde-transformation from $\tilde\alpha$ to $\tilde\beta$ of length dist$_\sim(\tilde\alpha, \tilde\beta)$ comes through words that contain $f$. Such a pair is called pair of {\em tilde-witnesses} for $f$. The proof or our main theorem  gives also  an explicit construction of the tilde-witnesses in all possible cases.

\medskip 
The paper is organized as follows. In a short  preliminary section  we first fix some basic notations on words used through the paper and then we  report known results on (Hamming) isometric words. Then two main sections follow. The first one starts by introducing the tilde-distance together with examples and the notion of tilde-transformations. This leads to the formal definition of tilde-isometric words and some examples of them. Then, we study the overlap with tilde-errors and finally we state our main result that consists in the fully characterization of tilde-isometric words in terms of special properties of their overlaps. The last section is completely dedicated to the proof of our main characterization theorem. The two directions of the proofs are given in two separate subsections respectively and are preceded by some technical lemmas. Finally, a partial and preliminary version of the results in this paper can be found in \cite{AnselmoCFGMM23}.
\section{Preliminaries}\label{s:preli}

Let $\Sigma$ be a finite alphabet.
A word (or string) $w$ of length $|w|=n$, is $w=a_1a_2\cdots a_n$, 
where $a_1, a_2, \ldots, a_n$ are symbols in $\Sigma$. The set of all words over $\Sigma$ is denoted $\Sigma^*$ and the set of all words over $\Sigma$ of length $n$ is denoted $\Sigma^n$. Finally, $\epsilon$ denotes the {\em empty word} and $\Sigma^+=\Sigma^* - \{\epsilon\}.$
For any word $w=a_1a_2\cdots a_n$, the {\em reverse} of $w$ is the word $w^{rev}=a_na_{n-1}\cdots a_1$. If $x \in \{0, 1\}$, we denote by $\overline{x}$ the opposite of $x$, i.e $\overline{x}=1$ if $x=0$ and vice versa. 
Then we define {\em complement} of $w$ the word $\overline{w}=\overline{a}_1\overline{a}_2\cdots \overline{a}_n$.

Let 
  $w[i]$ denote the symbol of $w$ in position $i$, i.e. $w[i]=a_i$.
  If $f=w[i .. j] = a_i \cdots a_j$,
   for $1\leq i\leq j\leq n$, we say that $f$ is a \emph{factor} of $w$ that occurs in the interval $[i .. j]$, or equivalently, at position $i$. 
  The \emph{prefix}  (resp. \emph{suffix}) of $w$ of length $l$, with $1 \leq \ell \leq n-1$ is ${\rm pre}_\ell(w) = w[1 .. \ell]$  (resp. ${\rm suf}_\ell (w) = w[n-\ell+1 .. n]$).
  When ${\rm pre}_\ell(w) = {\rm suf}_\ell (w)=u$ then $u$ is here referred to as an \emph{overlap} of $w$ of length $\ell$; 
  it is also called border, or bifix. A word $w$ is said {\em $f$-free} if $w$ does not contain $f$ as a factor.

An {\em edit operation} is a function $O: \Sigma^* \to \Sigma^*$ that transforms a word into another one. Among the most common edit operations there are the insertion, the  deletion or the replacement of a character and the swap of two adjacent characters. 
Let $\mathcal{O}$ be a {\em set of edit operations}. The {\em edit distance} of two words $u, v \in\Sigma^*$ 
is the minimum number of edit operations in $\mathcal{O}$ needed to transform $u$ into $v$. 

A well-known edit distance is the  {\em Hamming distance} which is defined over a binary alphabet and uses only the replacement operation. Based on the Hamming distance, the definition of {\em good} words
is introduced in \cite{IlickKR2012} and

put in connection with isometric subgraphs of the hypercubes in \cite{KlavzarS12}. In this paper we refer to this definition of isometric as {\em Ham-isometric}. More specifically, 
a word $f$ is {\em Ham-isometric} if for any pair of $f$-free words $u$ and $v$, there exists a sequence of replacement operations that applied to $u$ transform $u$ into $v$ keeping all the intermediate words $f$-free. 
In \cite{WeiYZ19}, it is given a characterization of   Ham-isometric words based on a property of their overlaps as follows.
Here, a word $w$ has a $2$-error overlap  if  there exists $1<\ell<|w|$ such that
${\rm pre}_\ell (w)$ and ${\rm suf}_\ell (w)$ differ in exacly 2 positions (i.e., they have Hamming distance equal to 2). 
\begin{proposition}\label{p-Hamiso}
   A word $f$ is not Ham-isometric if and only if $f$ has a 2-error overlap.
\end{proposition}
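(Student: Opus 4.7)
The plan is to prove both directions, with the constructive forward direction first and a structural analysis of a minimal witness for the converse.

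For the direction $(\Leftarrow)$, assume $f=a_1\cdots a_n$ has a $2$-error overlap of length $\ell$, and let $i_1<i_2$ be the two positions in $\{1,\dots,\ell\}$ where $\text{pre}_\ell(f)$ and $\text{suf}_\ell(f)$ differ. I would exhibit an explicit pair of $f$-free words of length $N=2n-\ell$. Think of $N$ as the length of a hypothetical string with two overlapping occurrences of $f$, one at position $1$ and one at position $n-\ell+1$; the overlap region is positions $n-\ell+1,\dots,n$. Define $u$ to be the string whose first $n-\ell$ symbols are $a_1\cdots a_{n-\ell}$, whose last $n-\ell$ symbols are $a_{\ell+1}\cdots a_n$, and whose overlap region agrees with $\text{pre}_\ell(f)$ at all positions except $i_2$, where it agrees with $\text{suf}_\ell(f)$. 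Define $v$ symmetrically by agreeing with $\text{pre}_\ell(f)$ except at $i_1$. Then $\hdist(u,v)=2$ (only the two overlap positions $n-\ell+i_1$ and $n-\ell+i_2$ disagree), neither candidate occurrence of $f$ is realised, and the only nontrivial step is to verify $u,v$ are globally $f$-free, which follows from the fact that any other would-be occurrence of $f$ is blocked by the overlap structure. The crucial observation is that either order of the two replacements produces $f$: flipping position $n-\ell+i_1$ first completes the right-hand copy of $f$, and flipping $n-\ell+i_2$ first completes the left-hand copy.

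For the direction $(\Rightarrow)$, let $(u,v)$ be an $f$-free witness pair with $\hdist(u,v)=k$ minimum among all witnesses. The first step is to argue $k=2$: if $k\ge 3$, one slices $u\to v$ into two shorter Hamming paths through some $f$-free intermediate $w$, and shows that at least one of the segments $(u,w)$ or $(w,v)$ must itself be a witness pair of strictly smaller Hamming distance, contradicting minimality. With $k=2$, say $u,v$ differ exactly at positions $p<q$. Each of the two orderings of single-symbol flips must create an occurrence of $f$; this yields two occurrences, one at interval $[i,i+n-1]$ after flipping $p$ and one at $[j,j+n-1]$ after flipping $q$.

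I then extract the overlap. Because $u$ is $f$-free, $[i,i+n-1]$ must contain $p$; because $v$ is $f$-free, the same interval must contain $q$ (otherwise flipping $q$ would leave the first occurrence of $f$ intact in $v$). Symmetrically $[j,j+n-1]$ contains both $p$ and $q$. A direct comparison with $u[i..i+n-1]$ forces $i\ne j$ (else $f$ would differ from $u[i..i+n-1]$ in both the one position $p$ and the one position $q$). Assume $i<j$; then the two intervals overlap in a common substring $x$ of length $\ell:=n-(j-i)$, and since both $p,q$ lie in the overlap we have $\ell\ge 2$ and $\ell<n$. Now $x$ equals $\text{suf}_\ell(f)$ except at the single position of $p$ and equals $\text{pre}_\ell(f)$ except at the single position of $q$. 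Since $p\ne q$, the prefix and suffix of $f$ of length $\ell$ disagree at exactly the two positions corresponding to $p$ and $q$, producing the desired $2$-error overlap.

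The main obstacle I expect is the reduction to $k=2$ in the converse direction: one has to argue that a purported minimal witness with $k\ge 3$ can be split so that a smaller witness survives, which requires handling the case where the obstruction is shared across the split. Once $k=2$ is established, the overlap extraction is a clean case analysis of where the two forced occurrences of $f$ can be positioned.
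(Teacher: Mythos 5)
There is a genuine gap in each direction, and the single word $f=1100$ exposes both. In your $(\Leftarrow)$ direction the pair you build is exactly the classical pair $\mathrm{pre}_r(f)R_{i_2}(f)$, $\mathrm{pre}_r(f)R_{i_1}(f)$ with $r=n-\ell$, and the step you dismiss as routine --- that $u$ and $v$ are globally $f$-free, ``blocked by the overlap structure'' --- is precisely where the construction can fail. For $f=1100$, whose unique $2$-error overlap is $\mathrm{pre}_2=11$ against $\mathrm{suf}_2=00$ with shift $r=2$, your $u$ is $111000$, which contains $1100$ at position $2$. In general only one of the two words is guaranteed $f$-free (this is the Hamming analogue of Lemma~\ref{l-alfa-tilde-f-free}); the other contains $f$ exactly when the overlap has the periodic structure of Definition~\ref{d-alfabeta} ($r$ even, $i_2-i_1=r/2$, $f[i_1..i_1+r/2-1]=f[i_2..i_2+r/2-1]$), and in that case one must switch to a different witness pair of Hamming distance $3$, as is done under Condition$^{\sim}$ in the proof of Proposition~\ref{p-skukkiati-C1}. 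Without this case split the $(\Leftarrow)$ direction is not proved. (The paper itself does not prove Proposition~\ref{p-Hamiso}; it imports it from \cite{WeiYZ19}, where this dichotomy is the content of Lemma 2.2 of \cite{Wei17}.)

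In the $(\Rightarrow)$ direction your reduction to $k=2$ is false, again for $f=1100$: running your own overlap-extraction argument backwards shows that any distance-$2$ witness pair for $1100$ would force $111000$ to occur in $u$ or in $v$, so no such pair exists; yet $(1101000,\,1110100)$ is a witness pair at Hamming distance $3$ for which every intermediate word on every geodesic (all six of them) contains $1100$. Hence there is no $f$-free $w$ through which to slice, and the slicing argument cannot start; the minimal witness distance is genuinely $3$ here. The workable route is the one taken (for the tilde-distance) in Lemma~\ref{l-Is&It}: keep $d$ arbitrary, show that for a minimal-distance witness every first operation creates an occurrence of $f$, and then locate two operations $O_s,O_t$ whose occurrence intervals each contain a position modified by the other; the $2$-error overlap is then extracted from those two occurrences essentially as in your final paragraph, which is the only part of your converse that is sound as written.
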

For example the binary word $111000$ is not Ham-isometric (take the prefix $11$ and the suffix $00$) while it is easy to verify that the word $1010$ does not have any 2-error overlap and therefore it is Ham-isometric. 

\section{Tilde-distance and Tilde-isometric words}\label{s-tildeiso}
In this section, we introduce the tilde-distance as the edit distance based on replacements and swaps. Then, using the tilde-distance,  we define the tilde-isometric words as a generalization of the  (classical) Ham-isometric words. Very interesting situations and examples arise since the possibility of applying swap operations makes very different scenarios. At the end of the section we state the main   theorem of the paper that provides a complete characterization of tilde-isometric words in terms of special blocks contained in their overlaps.
\subsection{Tilde-distance and tilde-transformations}\label{s-tildedist}
We start by formally introducing   the operations of replacements and swaps; they are the base for the definition of tilde-distance. Then, we point out  some interesting facts regarding the tilde-transformations between words considered in the definition of distance.

\begin{definition}
    Let $\Sigma$ be a finite alphabet and $w=a_1a_2\ldots a_n$ a word over $\Sigma$.
    The {\em replacement operation} (or {\em replacement}, for short) on $w$ at position $i$ with $x\in \Sigma$, $x\neq a_i$, is defined by
    $$R_{i,x}(a_1a_2\ldots a_{i-1}a_ia_{i+1}\ldots a_n)=a_1a_2\ldots a_{i-1} x a_{i+1}\ldots a_n.$$
    The {\em swap operation} (or {\em swap}, for short) on $w$ at position $i$ consists in exchanging  characters at positions $i$ and $i+1$, provided that they are different, $a_i \neq a_{i+1}$, 
    $$S_i(a_1a_2\ldots a_ia_{i+1}\ldots a_n)=a_1a_2\ldots a_{i+1}a_i \ldots a_n.$$
     \noindent 
\end{definition}

When the alphabet $\Sigma=\{0,1\}$ there is only one possible replacement at a given position $i$, so we  write $R_i(w)$ instead of $R_{i,x}(w)$. 
In the following we always assume $\Sigma=\{0,1\}$.

 Remark that a single swap operation is a sort of shortcut since it has the same effect of two replacements on two adjacent positions respectively.
Moreover note that both replacement and swap operations do not change the length of the word on which they are applied.

The edit distance based on swap and replacement operations will be called {\em tilde-distance} and denoted $\tdist$.

\begin{definition}
\label{def:d-tilde-distance}
Let $u, v \in \Sigma^n$. The {\em tilde-distance} ${\rm dist}_\sim(u,v)$ between $u$ and $v$ is the minimum number of replacements and swaps needed to transform $u$ into $v$.
\end{definition}

In this paper, 
we focus on the sequences of operations that transform a word  into another one. We give first a formal definition of a transformation in the setting of the tilde-distance.

\begin{definition}\label{def:tilde-transformation}
  Let $u, v \in \Sigma^n$ be words of equal length and ${\rm dist}_\sim(u,v)=d$. A \emph{tilde-transformation} $\tau$ from $u$ to $v$ is a sequence of $d+1$ words $(w_0, w_1, \ldots, w_d)$ such that $w_0=u$, $w_d=v$, and for any $k=0, 1, \ldots ,d-1$, ${\rm dist}_\sim(w_k,w_{k+1})=1$. Moreover, let $f \in \Sigma^+$, $\tau$ is {\em $f$-free} if for any $i = 0,1, \ldots ,d$, the word $w_i$ is $f$-free. \end{definition}

A tilde-transformation $(w_0, w_1, \ldots, w_d)$ from $u$ to $v$ is associated to a sequence of $d$ operations $(O_{i_1}, O_{i_2},\ldots O_{i_d})$  such that, for any $k=1, \ldots ,d$,  $O_{i_k} \in \{R_{i_k},S_{i_k}\}$ and $w_{k}=O_{i_k}(w_{k-1})$; it can be represented as follows: 
 $$u=w_0\xrightarrow{O_{i_1}}w_1 \xrightarrow{O_{i_2}} \cdots \xrightarrow{O_{i_d}}w_d=v.$$
Therefore, an equivalent way to refer to a tilde-transformation is to give the sequence $(O_{i_1}, O_{i_2},\ldots O_{i_d})$ of its operations. 
When considering a tilde-transformation of $u$ into $v$, it is helpful to easily detect the positions where they differ. For this, it is useful to write $u$ and $v$ aligned one above the other. So, in what follows, $u\choose v$ will denote an alignment of two words $u$ and $v$ of equal length $n$, meaning that for all $1\leq k\leq n$, $u[k]$ is aligned on $v[k]$.
 When an alignment 
 $u \choose v$  is considered, a factor of length $m$ of the alignment,  denoted by $x\choose y$, is called a \emph{block} of $u\choose v$ and we say that it occurs in position $i$ if $x, y$ occur in $i$ as factors of $u$ and $v$, respectively. For instance, the block $100 \choose 011$ occurs at position $3$ in $001001 \choose 010110$.

Observe that the number of the operations in a tilde-transformation should correspond to the distance between the two words, but we do not put any restrictions on the kind of operations involved. In particular we also allow a position $i$ to be changed twice by a pair of swaps $S_{i-1}$ and $S_{i}$, for some position $i$ (note that this was not permitted in the previous paper \cite{AnselmoCFGMM23}).
Further, observe that a sequence of swaps at consecutive positions $S_i$, $S_{i+1}, \dots, S_{i+k}$, with $k\geq 2$,  cannot be part of a tilde-transformation since the same result can be obtained by two replacements $R_i$ and $R_{i+k+1}$ in a shorter sequence of operations.

Another important issue is that, given two words $u$ and $v$, there are many tilde-transformations between them; such transformations may differ in the type of operations we choose and also in the order in which we perform such operations. Pay attention to the fact that you cannot always change the order of swaps. This is a major difference with the case of Hamming distance where only replacements are allowed and therefore all possible transformations can be obtained by exchanging the order of the operation application. The following significant cases explain this fact.

\begin{example}\label{ex:101_010}
    Let ${u\choose v}={101\choose 010}$. Then $(S_1, R_3)$, $(R_3, S_1)$, $(R_1, S_2)$ and $(S_2, R_1)$ are all the different tilde-transformations from $u$ to $v$. 
\end{example}
\begin{example}\label{ex:100_001}
    Let ${u\choose v}={100\choose 001}$. Then $(R_1, R_3)$ $(R_3, R_1)$ and $(S_1, S_2)$ are three tilde-transformations from $u$ to $v$. Note that the operations $S_1$ and $S_2$ do not commute, since $S_2$ cannot be executed before $S_1$. Therefore the pair $(u,v)$ admits three different tilde-transformations. Note also that tilde-transformation $(S_1, S_2)$ flips twice the bit in position $2$.
\end{example}

Define the sets of blocks
$\mathcal{B}_0=\{{10 \choose 01}, {01\choose 10}\}$, 
$\mathcal{B}_1=\{{101 \choose 010}, {010\choose 101}\}$ and $\mathcal{B}_2=\{{100 \choose 001}, {110\choose 011}{001\choose 100}, {011\choose 110}\}$, 
obtained by applying complement, reverse and exchange of rows to blocks $10\choose 01$, $101\choose 010$ and $100 \choose 001$, 
respectively. In what follows we will only consider one block in $\mathcal{B}_0$, $\mathcal{B}_1$ or in $\mathcal{B}_2$, respectively, as representative of its class, since the results on the other blocks can be easily  inferred from them.  

\begin{remark}\label{r-unici-blocchi-ambigui-dist2}
 Let $u,v\in\Sigma^*$. As a generalization of Example \ref{ex:101_010}, note that, if $u[i]=u[i+2]$, each time $R_i$ and $S_{i+1}$ belong to a tilde-transformation $\tau$, the sequence obtained from $\tau$ by substituting $R_i$ and $S_{i+1}$ with $S_i$ and $R_{i+2}$ is still a tilde-transformation, and vice versa. Similarly, as a generalization of Example \ref{ex:100_001}, the operations $S_i$ and $S_{i+1}$ can be substituted with $R_i$ and $R_{i+2}$ in a tilde-transformation; the vice versa is possible only when the block occurring in $u \choose v$ at position $i$ belongs to $\mathcal{B}_2$. In particular, when ${\rm dist}_\sim(u,v)=2$, 
 exactly two tilde-transformations from $u$ to $v$ exist, except in those cases in which a block in 
$\mathcal{B}_1$ occurs in $u \choose v$ yielding four tilde-transformations from $u$ to $v$ (cf. Example \ref{ex:101_010}), and when a block in $\mathcal{B}_2$ occurs in 
 $u \choose v$ and produces three tilde-transformations from $u$ to $v$ (cf. Example \ref{ex:100_001}). 
\end{remark}

\subsection{Tilde-isometric words}

Tilde-transformations between two words play an important role in the definition of tilde-isometric words, here defined in analogy with the Ham-isometric words, but on the base on the swap and mismatch distance.
We give here the formal definition.

\begin{definition}\label{d-tilde-iso}
	Let $f\in \Sigma^n$, with $n\geq 1$, $f$ is \emph{tilde-isometric} if for any pair of $f$-free words $u$ and $v$ of length $m\geq n$, there exists a tilde-transformation from $u$ to $v$ that is $f$-free.
    It is \emph{tilde-non-isometric} if it is not tilde-isometric. 
\end{definition}

Note that, in order to prove that a word is tilde-non-isometric, it is sufficient to exhibit a pair $(u,v)$ of words contradicting  Definition \ref{d-tilde-iso}, that we call {\em pair of tilde-witnesses} for $f$. More formally we give the following definition.

\begin{definition}\label{d-witnesses}
    A pair $(u,v)$ of words in $\Sigma^m$ is a pair of \emph {tilde-witnesses} for $f$ 
    if: 
    
    1. ${\rm dist}_{\sim}(u,v) \geq 2$
        
    2. $u$ and $v$ are $f$-free
    
    3. there exists no $f$-free tilde-transformation from $u$ to $v$.
\end{definition}

\begin{example}\label{e-1010}
Let $f=1010$, then $f$ is tilde-non-isometric because the pair $(u, v)=(11000, 10110)$ is a pair of tilde-witnesses for $f$. In fact,  observe that there are  two tilde-transformations from $u$ to $v$, namely:
$$11000 \xrightarrow{S_2} \underline{1010}0 \xrightarrow{R_4} 10110 \hspace{1cm}11000 \xrightarrow{R_4} 1\underline{1010} \xrightarrow{S_2} 10110.$$ In both tilde-transformations $1010$ appears as factor after the first step, as evidenced by the underlined characters, i.e. they are not $f$-free.
\end{example}
\begin{example}
    Let $f=100011$ and consider the words $u$ and $v$ given by  ${u\choose v}={100101011\choose 100010011}$. Four tilde-transformations, exist from $u$ to $v$,  namely:
    $$100101011 \xrightarrow{R_4} 100001011 \xrightarrow{S_5} 100010011;
    100101011 \xrightarrow{S_5} 100110011 \xrightarrow{R_4} 100010011,$$  
    $$100101011 \xrightarrow{S_4} \underline{100011}011 \xrightarrow{R_6} 100010011;
    100101011 \xrightarrow{R_6} 100\underline{100011} \xrightarrow{S_4} 100010011.$$ 
    Note that the first and the second tilde-transformations are $f$-free whereas the third and the fourth are not. This proves that $(u,v)$ is not a pair of tilde-witnesses. Note that $u \choose v$ contains the block $101 \choose 010$ at position $4$.
\end{example}

As a matter of fact, a pair of tilde-witnesses for the word $f=100011$ of previous example does not exist, i.e. $f$ is tilde-isometric. This can be   proved by exploiting the  the main theorem of this paper that consists of a characterization of tilde-isometric words in terms  of a property of their overlaps.
From now on, we study isometric binary words starting with  $1$, in view of the following lemma whose  proof can be easily inferred by the definition.

\begin{lemma}\label{l-rev-comp}
	Let $f\in\Sigma^n$. The following statements are equivalent:
 \begin{enumerate}
     \item $f$ is tilde-isometric
     \item $f^{rev}$ is tilde-isometric 
     \item $\overline{f}$  is tilde-isometric.
 \end{enumerate}
\end{lemma}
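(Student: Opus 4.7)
The plan is to verify that both the reversal map $w\mapsto w^{rev}$ and the complement map $w\mapsto\overline{w}$ are length-preserving involutions on $\Sigma^*$ that (i) preserve the $f$-free property in a controlled way and (ii) commute with the edit operations underlying the tilde-distance. Once these are in place, any $f$-free tilde-transformation transports to an $f^{rev}$-free (resp.\ $\overline{f}$-free) tilde-transformation by applying the map termwise, and both implications between each pair of statements follow from the involutive nature of the maps. Because of the symmetry, I would only prove $(1)\Leftrightarrow(2)$ and $(1)\Leftrightarrow(3)$; within each, it suffices to show one direction.

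First I would record the routine facts. For any $w\in\Sigma^n$, any position $i$, and any $x\in\Sigma$:
\[
(R_{i,x}(w))^{rev}=R_{n-i+1,x}(w^{rev}),\qquad (S_i(w))^{rev}=S_{n-i}(w^{rev}),
\]
\[
\overline{R_{i,x}(w)}=R_{i,\overline{x}}(\overline{w}),\qquad \overline{S_i(w)}=S_i(\overline{w}).
\]
These identities are immediate from the definitions (the swap case under reversal uses that swapping the adjacent pair at positions $(i,i+1)$ in $w$ corresponds to swapping the adjacent pair at positions $(n-i,n-i+1)$ in $w^{rev}$; under complement the swap positions are unchanged since both characters get complemented). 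Consequently, ${\rm dist}_\sim(u,v)={\rm dist}_\sim(u^{rev},v^{rev})={\rm dist}_\sim(\overline{u},\overline{v})$ for any pair $u,v$ of equal length. Moreover, $w$ contains $f$ as a factor if and only if $w^{rev}$ contains $f^{rev}$ as a factor, and if and only if $\overline{w}$ contains $\overline{f}$ as a factor.

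Next, to prove $(1)\Rightarrow(2)$, assume $f$ is tilde-isometric and let $u,v$ be $f^{rev}$-free words of length $m\geq n$. By the last observation, $u^{rev}$ and $v^{rev}$ are $f$-free, so by hypothesis there exists an $f$-free tilde-transformation $(w_0,w_1,\dots,w_d)$ from $u^{rev}$ to $v^{rev}$. Applying reversal termwise yields a sequence $(w_0^{rev},w_1^{rev},\dots,w_d^{rev})$ that starts at $u$, ends at $v$, and in which consecutive words still differ by exactly one replacement or swap (by the identities above); hence it is a tilde-transformation of length $d={\rm dist}_\sim(u,v)$. Each $w_i^{rev}$ is $f^{rev}$-free because $w_i$ is $f$-free. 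So $f^{rev}$ is tilde-isometric. The reverse implication $(2)\Rightarrow(1)$ is identical after noting that $(f^{rev})^{rev}=f$. The equivalence $(1)\Leftrightarrow(3)$ is proved in exactly the same fashion using the complement identities and the fact that $\overline{\overline{f}}=f$.

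I do not expect any real obstacle here; the only point requiring care is bookkeeping the position shift of swap operations under reversal, but this is handled by the identity $(S_i(w))^{rev}=S_{n-i}(w^{rev})$ stated above.
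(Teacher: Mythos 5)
Your proof is correct and is precisely the routine argument the paper has in mind: the paper omits the proof entirely, stating only that it ``can be easily inferred by the definition,'' and your termwise transport of an $f$-free tilde-transformation via the involutions $w\mapsto w^{rev}$ and $w\mapsto\overline{w}$, using the identities $(R_{i,x}(w))^{rev}=R_{n-i+1,x}(w^{rev})$, $(S_i(w))^{rev}=S_{n-i}(w^{rev})$, $\overline{S_i(w)}=S_i(\overline{w})$, fills in exactly those details. No issues.
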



Although the tilde-distance is more general than the Hamming distance,  
the sets of Ham-isometric and tilde-isometric words are incomparable, as stated in the following proposition.

\begin{proposition}\label{p-compareHam}
There exists a word which is tilde-isometric but Ham-non-isometric, and
a word which is tilde-non-isometric, but Ham-isometric.
\end{proposition}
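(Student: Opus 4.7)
The proposition is an existence statement, so my plan is to exhibit one word for each direction and verify its isometry type using Proposition \ref{p-Hamiso} together with Definitions \ref{d-tilde-iso} and \ref{d-witnesses}.

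For a word that is Ham-isometric but tilde-non-isometric, I propose $f = 1010$. Example \ref{e-1010} already exhibits the pair of tilde-witnesses $(11000, 10110)$, so tilde-non-isometry is immediate. To verify Ham-isometry I would invoke Proposition \ref{p-Hamiso} and inspect the three nontrivial overlaps of $f$: the length-$1$ prefix $1$ and suffix $0$ differ in one position; the length-$2$ prefix and suffix are both equal to $10$; and the length-$3$ prefix $101$ and suffix $010$ differ in all three positions. No $2$-error overlap exists, so $f$ is Ham-isometric.

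For a word that is tilde-isometric but Ham-non-isometric, I propose $f = 111000$. Proposition \ref{p-Hamiso} immediately gives Ham-non-isometry, since ${\rm pre}_2(f) = 11$ and ${\rm suf}_2(f) = 00$ have Hamming distance $2$. For the tilde-isometry part, the central observation is that the natural Ham-witness derived from this $2$-error overlap, for instance the pair $(1110011000, 1110101000)$, consists of two $f$-free words differing at two adjacent positions with values $(0,1)$ and $(1,0)$; a single swap $S_5$ realizes an $f$-free tilde-transformation of length $1$, so the tilde-distance of the pair is only $1$ and the pair is not a tilde-witness. To complete the argument I would then show that no other pair of $f$-free words is a tilde-witness either, by appealing to the main characterization theorem stated later in this section, whose overlap conditions are not satisfied by $f = 111000$.

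The main obstacle is the positive verification of tilde-isometry for $f = 111000$. Unlike the Hamming case, ruling out tilde-witnesses ad hoc is delicate because, as emphasized in Remark \ref{r-unici-blocchi-ambigui-dist2}, tilde-transformations admit noncommuting operation orders and mix swaps with replacements, producing many more candidate configurations than simply flipping two positions. The cleanest route therefore passes through the characterization theorem stated shortly, which reduces the question to a finite inspection of the overlaps of $f$.
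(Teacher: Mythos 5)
Your proposal is correct and follows essentially the same route as the paper: the paper's proof also uses $f=111000$ (tilde-isometric by the characterization theorem via Example \ref{ex:1_n_o_m}, Ham-non-isometric by its $2$-error overlap of shift $4$) and $f'=1010$ (tilde-non-isometric by the witnesses of Example \ref{e-1010}, Ham-isometric by Proposition \ref{p-Hamiso}). Your additional checks (the explicit overlap inspection of $1010$ and the observation that the natural Hamming witness pair collapses to tilde-distance $1$) are consistent elaborations of the same argument.
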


\begin{proof}
The word $f=111000$ is tilde-isometric (cf. Example  \ref{ex:1_n_o_m}), but $f$ is Ham-non-isometric. In fact, $f$ has a $2$-tilde-error overlap (with respect to the Hamming distance) with shift 4 and then it
is Ham-non-isometric
by Proposition \ref{p-Hamiso}. 
	\\	
Conversely, $f'=1010$ is tilde-non-isometric (see Example \ref{e-1010}), but Ham-isometric by Proposition \ref{p-Hamiso}. \end{proof}

\subsection{Tilde-error overlaps}\label{s-iso-e-2eo}
The characterization of Ham-isometric words given in \cite{WeiYZ19} and here reported as Proposition \ref{p-Hamiso}, uses the notion of $2$-error overlap. In this section we introduce the corresponding definition that refers to the tilde-distance. Tilde-error overlaps will have a main role in the characterization of tilde-isometric words but the presence of swap operations will force us to handle them with care. 

 \begin{definition}
	Let $f\in \Sigma^n$.
	Then, $f$ has a {\em $q$-tilde-error overlap} of length $\ell$ and shift $r=n-\ell$,  with
	$1 \leq \ell \leq n-1$ and $0\leq q \leq \ell$,
	if
	${\rm dist}_{\sim} ({\rm pre}_\ell(f), {\rm suf}_\ell(f))=q$.

\end{definition}

\begin{example}\label{ex:2eo}
    The word $f=1101110101101$ has a $2$-tilde-error overlap of length $6$ and shift $7$. Indeed, ${\rm pre}_6(f)=110111$, ${\rm suf}_6(f)=101101$ and ${\rm dist}_{\sim} (110111, 101101)=2$.
\end{example}

For our proofs, given a word $f$ with  a $q$-tilde-error overlap of length $\ell$, we will study the tilde-transformations $\tau$ from ${\rm pre}_\ell(f)$ to ${\rm suf}_\ell(f)$ with $q$ operations. For this reason we need to refer to the alignment of the two strings ${\rm pre}_\ell(f)$ to ${\rm suf}_\ell(f)$. Furthermore, in some technical proofs it will sometimes be relevant to consider the bits adjacent to a tilde-error overlap of a word. For this reason we introduce the following notation.
Let $f$ be a word in $\{0,1\}^*$ and $\$$ be a symbol different from $0,1$, here used as delimiter of a word  that ``matches" any symbol of the word. Consider $f$ with its delimiters $\$f\$$. 
A $q$-tilde-error overlap of length $\ell$ is denoted by  $\$xa\choose by\$$ where $xa$, $by$ are a prefix and a suffix, respectively, of $f$, $a,b\in \Sigma$, $x,y\in \Sigma^*$ with $|x|=|y|=\ell$, and $\tdist(\$xa, by\$)=\tdist(x,y)=q$. This notation makes evident the fact that in $f$ the prefix $x$ is followed by $a$ and the suffix $y$ is preceded by $b$. Moreover, a  $q$-tilde-error overlap $\$xa\choose by\$$ is sometimes factorized into blocks to highlight the significant part. For example, the $2$-tilde-error overlap of Example \ref{ex:2eo} is denoted by ${\$1 \choose 01} {1011 \choose 0110} {10 \choose 0\$}$ because ${\rm dist}_{\sim} (110111, 101101)=2={\rm dist}_{\sim} (1011, 0110)$.

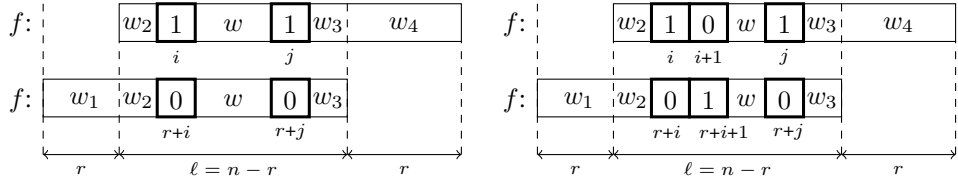
\begin{figure}[ht]
    \begin{center}
        \begin{tikzpicture}
		\node at (0,2.32){$f$:};
		\node at (2.05,1.9){\scriptsize $i$};
		\node at (3.55,1.9){\scriptsize $j$};
		\node at (1.55,2.32){$w_2$};
		\node at (2.05,2.32){$1$};
		\node at (2.8,2.32){$w$};
		\node at (3.55,2.32){$1$};
		\node at (4.05,2.32){$w_3$};
		\node at (5.05,2.32){$w_4$};
		\node at (0,1.32){$f$:};
		\node at (2.05,0.9){\scriptsize $r$+$i$};
		\node at (3.55,0.9){\scriptsize $r$+$j$};
		\node at (0.85,1.32){$w_1$};
		\node at (1.55,1.32){$w_2$};
		\node at (2.05,1.32){$0$};
		\node at (2.8,1.32){$w$};
		\node at (3.55,1.32){$0$};
		\node at (4.05,1.32){$w_3$};
		\node at (0.8,0.4){\scriptsize $r$};
		\node at (2.8,0.4){\scriptsize $\ell=n-r$};
		\node at (5.05,0.4){\scriptsize $r$};
		\node at (6.5,2.32){$f$:};
		\node at (8.55,1.9){\scriptsize $i$};
		\node at (9.05,1.9){\scriptsize $i$+$1$};
		\node at (10.05,1.9){\scriptsize $j$};
		\node at (8.05,2.32){$w_2$};
		\node at (8.55,2.32){$1$};
		\node at (9.05,2.32){$0$};
		\node at (9.55,2.32){$w$};
		\node at (10.05,2.32){$1$};
		\node at (10.55,2.32){$w_3$};
		\node at (11.55,2.32){$w_4$};
		\node at (6.5,1.32){$f$:};
		\node at (8.5,0.9){\scriptsize $r$+$i$};
		\node at (9.25,0.9){\scriptsize $r$+$i$+$1$};
		\node at (10.1,0.9){\scriptsize $r$+$j$};
		\node at (7.35,1.32){$w_1$};
		\node at (8.05,1.32){$w_2$};
		\node at (8.55,1.32){$0$};
		\node at (9.05,1.32){$1$};
		\node at (9.55,1.32){$w$};
		\node at (10.05,1.32){$0$};
		\node at (10.55,1.32){$w_3$};
		\node at (7.3,0.4){\scriptsize $r$};
		\node at (9.3,0.4){\scriptsize $\ell=n-r$};
		\node at (11.55,0.4){\scriptsize $r$};
		\draw (1.3,2.1) rectangle (5.8,2.6);
		\draw (0.3,1.1) rectangle (4.3,1.6);
		\draw (7.8,2.1) rectangle (12.3,2.6); 
		\draw (6.8,1.1) rectangle (10.8,1.6);
		\draw [line width=0.4mm] (1.8,2.1) rectangle (2.3,2.6);
		\draw [line width=0.4mm] (3.3,2.1) rectangle (3.8,2.6);
		\draw [line width=0.4mm] (1.8,1.1) rectangle (2.3,1.6);
		\draw [line width=0.4mm] (3.3,1.1) rectangle (3.8,1.6);
		\draw [line width=0.4mm] (8.3,2.1) rectangle (8.8,2.6);
		\draw [line width=0.4mm] (8.8,2.1) rectangle (9.3,2.6);
		\draw [line width=0.4mm] (9.8,2.1) rectangle (10.3,2.6);
		\draw [line width=0.4mm] (8.3,1.1) rectangle (8.8,1.6);
		\draw [line width=0.4mm] (8.8,1.1) rectangle (9.3,1.6);
		\draw [line width=0.4mm] (9.8,1.1) rectangle (10.3,1.6);
		\draw [dashed] (6.8, 0.6) -- (6.8,2.6);
		\draw [dashed] (7.8, 0.6) -- (7.8,2.1);
		\draw [dashed] (10.8, 0.6) -- (10.8, 2.6);
		\draw [dashed] (12.3, 0.6) -- (12.3,2.1);
		\draw [dashed] (0.3, 0.6) -- (0.3,2.6);
		\draw [dashed] (1.3, 0.6) -- (1.3,2.1);
		\draw [dashed] (4.3, 0.6) -- (4.3, 2.6);
		\draw [dashed] (5.8, 0.6) -- (5.8,2.1);
		\draw [<->] (0.3,0.6) -- (1.3,0.6);
		\draw [<->] (1.3,0.6) -- (4.3,0.6);
		\draw [<->] (4.3,0.6) -- (5.8,0.6);
		\draw [<->] (6.8,0.6) -- (7.8,0.6);
		\draw [<->] (7.8,0.6) -- (10.8,0.6);
		\draw [<->] (10.8,0.6) -- (12.3,0.6);
	\end{tikzpicture}
    \end{center}
    \caption{A word $f$ and its $2$-tilde-error overlap of shift $r$ and length $\ell=n-r$, with tilde-transformation $(O_i, O_j)=(R_i,R_j)$ (left), and $(O_i, O_j)=(S_i,R_j)$ (right)}. 
    \label{f-w-overlap}
\end{figure}

In the sequel, we will be interested in the specific case of $1$-tilde-error overlap where the single error in the alignment is a swap and  in all the cases of $2$-tilde-error overlaps.  
Consider a $2$-tilde-error overlap of $f$ of shift $r$, length $\ell=n-r$, and let $(O_i,O_j)$, $1\leq i<j \leq\ell$, be a tilde-transformation from ${\rm pre}_{\ell}(f)$ to ${\rm suf}_{\ell}(f)$.
Observe that the positions in 
${\rm pre}_{\ell}(f)$ modified by $O_i$ 
are either $i$ or both $i$ and $i+1$, following that $O_i$ is a replacement or a swap.
Hence, the positions modified by $O_i$ and $O_j$ may be 2, 3 or 4.
Fig. \ref{f-w-overlap} shows a word $f$ with its $2$-tilde-error overlap of shift $r$ and length $\ell=n-r$. With our notation, the $2$-tilde-error overlap is  $\$w_21w1w_3a \choose bw_20w0w_3\$$ in the figure on the left
and
$\$w_210w1w_3a \choose bw_201w0w_3\$$ in the figure on the right,
where $a$ is 
the last letter of $w_1$ and $b$ 
the first letter of $w_4$. 
A tilde-transformation from ${\rm pre}_{\ell}(f)$ to ${\rm suf}_{\ell}(f)$ is given by $(O_i, O_j)=(R_i, R_j)$ in the first case
and by $(O_i, O_j)=(S_i, R_j)$ in the second case.
We say that a $2$-tilde-error overlap has {\em non-adjacent errors} when 
 there is at least one character interleaving 
the positions modified by $O_i$ and those modified by $O_j$.

The $2$-tilde-error overlap ${\$x \choose ax }{100 \choose 001}{yb \choose y\$}$ is also considered as having non-adjacent errors because it admits the tilde-transformation $(O_i, O_j)=(R_i,R_{i+2})$, despite it has also the other $(O_i, O_j)=(S_i,S_{i+1})$. 

In all the other cases,
we say that the $2$-tilde-error overlap has {\em adjacent errors}.  

\section{Characterization of tilde-isometric words}
Let us state the main result of the paper that consists in the characterization of tilde-isometric words in terms of special configurations in their overlap. The proof of the theorem is quite complicate and is given separately in the next section.

Recall that in this paper a block in
 $\mathcal{B}_0$, 
 $\mathcal{B}_1$ or $\mathcal{B}_2$ is
 taken as representative of  its class, since the results on the other blocks can  easily be inferred from it.

\begin{theorem}\label{t:characterization}
 A word $f \in \Sigma^n$ is tilde-non-isometric if and only if one of the following cases occurs (up to complement, reverse and inversion of rows):
\begin{itemize}
\setcounter{enumi}{-1}
    \item[$(C0)$]\label{case0} $f$ has a $1$-tilde-error overlap ${\$x\choose ax}{01\choose 10}{yb\choose y\$}$ with $x,y\in \Sigma^*$, $a, b\in \Sigma$\\
   \item[$(C1)$]\label{case1Teo}$f$ has a $2$-tilde-error overlap 
    with non-adjacent errors, 
different from   
    ${\$x \choose ax}{000\choose 101}{yb \choose y\$}$ with $x, y \in \Sigma^+$, $a,b\in \Sigma$
    \item[$(C2)$]\label{case2Teo} $f$ has a $2$-tilde-error overlap ${\$x\choose ax}{0101\choose 1010}{yb\choose y\$}$ 
    or ${\$x\choose ax}{0110\choose 1001}{yb\choose y\$}$
    with $x,y\in \Sigma^*$, $a,b\in \Sigma$

    \item[$(C3)$]\label{case3Teo} $f$ has a $2$-tilde-error overlap 
    ${\$x\choose ax} {{010\choose 101}} {yb\choose y\$}$ with $x,y\in \Sigma^*$, $a,b\in \Sigma$
     
    \item[$(C4)$]\label{case4Teo} $f$ has a $2$-tilde-error overlap 
    ${\$x\choose ax}{011\choose 100}{0\choose \$}$
    with $x\in \Sigma^*$, $a\in \Sigma$
    
    \item[$(C5)$]\label{case5Teo} $f$ has a $2$-tilde-error overlap ${\$ \choose 0}{00\choose 11}{1\choose \$}$
    
\end{itemize}
\end{theorem}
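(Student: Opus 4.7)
The plan is to prove both directions of the biconditional separately, with the easier direction being the sufficiency. For the overall strategy, I will follow the template set by Proposition \ref{p-Hamiso} for the Hamming case, but with a substantially more delicate case analysis owing to the interaction between replacements and swaps, and the fact that two words at tilde-distance $2$ may admit up to four distinct tilde-transformations (Remark \ref{r-unici-blocchi-ambigui-dist2}).

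\medskip

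\noindent\textbf{Sufficiency ($\Leftarrow$).} For each case $(C0)$--$(C5)$, I would explicitly exhibit a pair of tilde-witnesses. Given an overlap of shift $r$ and length $\ell$, the natural candidate is to take $\alpha$ and $\beta$ of length $n+r$ such that $\alpha$ agrees with $f$ on its first $n$ positions and $\beta$ agrees with $f$ on its last $n$ positions, while the inner mismatch pattern in the alignment $\alpha \choose \beta$ reproduces the block appearing in the tilde-error overlap (an isolated swap for $(C0)$, and the two-error block for $(C1)$--$(C5)$). One verifies that both $\alpha$ and $\beta$ are $f$-free (by minimality of the overlap) and that every one of the at most four tilde-transformations from $\alpha$ to $\beta$ must, at some intermediate step, realign the two prefix/suffix copies and hence produce $f$. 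The exclusion of ${\$x \choose ax}{000\choose 101}{yb \choose y\$}$ in $(C1)$ is exactly the configuration in which the block in the middle belongs to $\mathcal{B}_2$, so a swap-swap alternative transformation exists and bypasses every occurrence of $f$, preventing the candidate pair from being a witness.

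\medskip

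\noindent\textbf{Necessity ($\Rightarrow$).} Suppose $f$ is tilde-non-isometric and pick a pair of tilde-witnesses $(\alpha,\beta)$ minimising first ${\rm dist}_\sim(\alpha,\beta)$ and then $|\alpha|$. A standard step (cutting along the first obstructed intermediate step) lets me reduce to ${\rm dist}_\sim(\alpha,\beta)=2$, so that $\alpha\choose\beta$ contains either an isolated swap block, or exactly two error blocks, or one of the $\mathcal{B}_1,\mathcal{B}_2$ blocks that admits several tilde-transformations. Every such transformation must traverse a word containing $f$, yielding two (or more) occurrences of $f$ in different intermediate words at different positions. The minimality of $|\alpha|$ forces each such occurrence to cover the rightmost modified position of $\alpha$ or $\beta$, hence the two occurrences must overlap in $\alpha$ (or $\beta$), and aligning their shift against the positions of the operations produces a tilde-error overlap of $f$. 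A careful case split on (i) whether the two operations are replacement--replacement, replacement--swap, swap--replacement, or swap--swap, (ii) whether they are adjacent or non-adjacent, and (iii) the relative position of the operations against the boundary of $f$ then forces exactly one of the configurations $(C0)$--$(C5)$; the configuration ${\$x \choose ax}{000\choose 101}{yb \choose y\$}$ is correctly excluded in $(C1)$ because its swap-swap transformation does yield an $f$-free path, contradicting the witness assumption.

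\medskip

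\noindent\textbf{Main obstacle.} The technical heart of the argument is the case analysis in the necessity direction when swap operations interact with the boundary of $f$. In the Hamming setting, two replacements commute and the two intermediate words of length-$2$ transformations are unambiguous, so the positions of the forbidden $f$-occurrences translate directly into a $2$-error overlap. With swaps this breaks in two ways: a swap flips two adjacent bits simultaneously, so the ``position'' where $f$ must occur is more ambiguous and forces boundary cases $(C4)$ and $(C5)$; and certain blocks (those in $\mathcal{B}_1,\mathcal{B}_2$) admit alternative transformations which must all be blocked simultaneously, which is what pins down $(C2)$ and $(C3)$ and what creates the exception inside $(C1)$. Disentangling these subcases and proving that no configuration outside $(C0)$--$(C5)$ can give rise to a tilde-witness is the main combinatorial work; I expect that several preliminary lemmas on the structure of length-$2$ tilde-transformations and on how two overlapping occurrences of $f$ interact with a block in $\mathcal{B}_0\cup\mathcal{B}_1\cup\mathcal{B}_2$ will be needed before the case split becomes manageable.
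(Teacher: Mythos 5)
Your overall strategy coincides with the paper's (minimal witness pair plus forced occurrences of $f$ for necessity; explicit witness construction from the overlap for sufficiency), but two steps as you describe them would fail. First, in the necessity direction you claim a ``standard step'' reduces to ${\rm dist}_\sim(\alpha,\beta)=2$. No such reduction is available, and the paper does not perform one: Lemma \ref{l-Is&It} works with a minimal-distance witness pair of arbitrary distance $d$, shows that \emph{every} single operation $O_{i_j}$ applied to $u$ already creates an occurrence of $f$ (so there is no ``first obstructed step'' to cut at --- they are all obstructed), and then extracts two interacting operations $O_s,O_t$ from the sequence of occurrence intervals. Indeed, the sufficiency proof itself shows that when the overlap satisfies $Condition^{\sim}$ the natural distance-$2$ candidate $(\tilde\alpha_r,\tilde\beta_r)$ is not a witness (because $\tilde\beta_r$ contains $f$) and a distance-$3$ pair must be used instead; so you cannot assume minimal witnesses live at distance $2$. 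Second, and relatedly, your sufficiency sketch asserts that the ``natural candidate'' pair always works, but a substantial part of Propositions \ref{p-skukkiati-C1} and \ref{p-2-sufficiente} is devoted to the cases where it does not: when $Condition^{\sim}$ holds one needs the three-operation pair $(\tilde\eta_r,\tilde\gamma_r)$, and cases $(C4)$ and $(C5)$ require the ad hoc constructions $\B_r$ and $\tilde\psi_r$. Omitting this case split leaves the sufficiency direction unproved precisely in the situations that distinguish the tilde setting from the Hamming one.

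A smaller but genuine inaccuracy: you explain the exclusion of ${\$x \choose ax}{000\choose 101}{yb \choose y\$}$ in $(C1)$ by saying ``the block in the middle belongs to $\mathcal{B}_2$.'' The block ${000\choose 101}$ is not in $\mathcal{B}_2$ (one cannot even swap the two equal symbols in $000$). What actually happens is that this overlap block induces a block of the form ${100\choose 001}\in\mathcal{B}_2$ in the alignment of the \emph{candidate witnesses}, and it is that induced block which admits the swap--swap transformation avoiding $f$; this is formalized in the paper via Lemma \ref{l-blocchi-esclusi}, which rules out the blocks ${101\choose 010}$ and ${100\choose 001}$ in ${u\choose v}$ for a minimal witness pair and is what forces the exception inside $(C1)$. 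Your intuition points in the right direction, but as stated the justification confuses the overlap of $f$ with the alignment of the witnesses, and the necessity argument needs the witness-level exclusion lemma, which your sketch does not supply.
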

 
As an
 application of Theorem \ref{t:characterization} let us show the following example of a family of tilde-isometric words.

\begin{example}
All the words $f=1^n0^m$ for $n,m> 2$ are tilde-isometric. 
In fact, for $n,m> 2$, $f=1^n0^m$ has only two $2$-tilde-error overlaps and none of them fall into a case in the statement of Theorem \ref{t:characterization}.
The first one is the tilde-error overlap with shift $2$, ${\$1^{n-2} \choose 1 1^{n-2}}{11\choose 00}{0^{n-2}0 \choose 0^{n-2}\$}$,

the other one has shift $n+m-2$, and it is
${\$\choose 0}{11\choose 00}{1\choose \$}$.

Note that if $n=m=2$, $f=1100$ is tilde-non-isometric. In fact one can  verify that the pair of words ${u\choose v}={110100\choose 101010}$ is a pair of tilde-witnesses for $f$. As expected from Theorem \ref{t:characterization}, $f$ has only one $2$-tilde-error overlap ${\$ \choose 1}{11\choose 00}{0\choose \$}$, corresponding to case {\em ($C5$)}. 
Note also that words $f=1^n0^m$ with $n,m\geq 2$ are Hamming non-isometric. 
\end{example}\label{ex:1_n_o_m}

The following are two examples of words with a $2$-tilde-error overlap with adjacent errors. The first one is tilde-isometric, the second one is tilde-non-isometric.

 \begin{example}
     The word $f=010110000$ is tilde-isometric; indeed its unique $2$-tilde-error overlap has shift $5$ and length $4$, ${\$0 \choose 10}{101 \choose 000}{1 \choose \$}$. Note this is the case of non-adjacent errors but of the type prohibited by condition in (C1). 
 \end{example}

 \begin{example}
     The word $f=1011000$ is tilde-non-isometric; indeed it has the $2$-tilde-error overlap, of shift $4$ and length $3$, ${\$ \choose 1}{101 \choose 000}{1 \choose \$}$, that verifies (C1). Note that the pair $(u,v)=(10110011000, 10101001000)$ is a pair of tilde-witnesses for $f$.
 \end{example}
 
 \section{The proof of the Characterization Theorem}
This last section is very technical and contains the proof of Theorem \ref{t:characterization}. The cases (C0) up to (C5) list all the possible configurations we can find in an overlap of a tilde-non-isometric word. The proof walks carefully through all these cases to show that they cover all possible situations.  The two implications of the theorem are proved separately in two corresponding subsections and are preceded by some technical lemmas.

\subsection{Properties of tilde-witnesses with minimal distance}

When a word $f$ is tilde-non-isometric then there exists a pair of tilde-witnesses. We focus our attention on a pair $(u,v)$ of tilde-witnesses at minimal distance among all such pairs of a given length. With this constraint, any choice of the starting operation $O_i$ in a tilde-transformation from $u$ to $v$, will cause an occurrence of the factor $f$ in $O_i(u)$. More specifically, this occurrence of $f$ will cover at least one position modified by the operation  $O_i$ (i.e. position $i$ and/or $i+1$ in the case of a swap $S_i$).

We start with two technical lemmas.

\begin{lemma}\label{l-Is&It} 
Let $f\in\Sigma^n$ be a tilde-non-isometric word and $(u, v)$, with $u, v\in\Sigma^m$, be a pair of tilde-witnesses with minimal distance 
${\rm dist}_{\sim}(u,v)$
among all pairs of tilde-witnesses of length $m$. Let $\mathcal{O}=\{O_{i_1}, O_{i_2}, \dots , O_{i_d}\}$ be the set of the operations in a tilde-transformation from $u$ to $v$, that does not contain two swaps at two consecutive positions. Then, 
\begin{enumerate}
    \item\label{s_1} for any $i_j\in \{i_1, i_2, \dots , i_d\}$, $f$ is a factor of $O_{i_j}(u)$. Let $k_{i_j}$  denote the  starting position of such occurrence of $f$ in $O_{i_j}(u)$, and  $I_{i_j}=[k_{i_j}\ldots k_i+|f|-1]$ the interval where  $f$  occurs in $O_{i_j}(u)$.
    \item\label{s_2}  there exist $s,t \in \{i_1, i_2, \dots i_d\}$, with $s<t$, such that $I_s$ and $I_t$ share at least one position modified by $O_{s}$ and at least one position modified by $O_{t}$. Moreover, without loss of generality, $k_s<k_t$.
\end{enumerate}
\end{lemma}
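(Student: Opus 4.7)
To establish statement \ref{s_1}, I argue by contradiction, exploiting the minimality of ${\rm dist}_\sim(u,v)$. Suppose some $O_{i_j}\in\mathcal{O}$ produces an $f$-free word when applied directly to $u$. First, using the hypothesis that $\tau$ contains no two swaps at consecutive positions, I would show that the operations of $\tau$ can be reordered (invoking the commutations described in Remark \ref{r-unici-blocchi-ambigui-dist2}) so as to obtain a tilde-transformation $u \to O_{i_j}(u) \to \cdots \to v$ of length $d={\rm dist}_\sim(u,v)$, in which $O_{i_j}$ is performed first. Then $(O_{i_j}(u), v)$ consists of two $f$-free words at tilde-distance at most $d-1$. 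Any $f$-free tilde-transformation from $O_{i_j}(u)$ to $v$, once prepended with $O_{i_j}$, would yield an $f$-free tilde-transformation from $u$ to $v$, contradicting the tilde-witness property of $(u,v)$; hence no such $f$-free transformation exists. If $d\geq 3$, then $(O_{i_j}(u), v)$ is itself a pair of tilde-witnesses of strictly smaller distance, violating the minimality of $d$. If $d=2$, then $O_{i_j}(u)$ and $v$ are at tilde-distance at most one, and the trivial one-step tilde-transformation between them is $f$-free, which again contradicts $(u,v)$ being a pair of tilde-witnesses. Thus $O_{i_j}(u)$ must contain $f$, and one defines $I_{i_j}=[k_{i_j}\ldots k_{i_j}+|f|-1]$ to be the corresponding interval.

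For statement \ref{s_2}, I would analyse how the occurrences $I_{i_j}$ interact. Since $u$ is $f$-free, each $I_{i_j}$ necessarily contains at least one position modified by $O_{i_j}$; write $M_{i_j}$ for that set, a singleton in the replacement case and a pair of consecutive positions in the swap case. Focus on the occurrence of $f$ at $I_{i_1}$ produced by the first step of $\tau$: because $w_d=v$ is $f$-free, this occurrence has to be destroyed at some later step, and I would choose $t$ to be the smallest index with $t>1$ and $M_{i_t}\cap I_{i_1}\neq\emptyset$. The intermediate operations $O_{i_2},\ldots,O_{i_{t-1}}$ leave every position of $I_{i_1}$ untouched, so $f$ still occurs at $I_{i_1}$ in $w_{t-1}$; hence $O_{i_t}$ must alter a character inside $I_{i_1}$. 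Combined with $M_{i_t}\cap I_{i_t}\neq\emptyset$, this places a position of $M_{i_t}$ inside $I_{i_1}\cap I_{i_t}$, at least when $O_{i_t}$ is a replacement. A symmetric argument---reorder $\tau$ so that $O_{i_t}$ is performed first, and trace how the $f$-occurrence at $I_{i_t}$ is eventually destroyed on the way to $v$---produces a position of $M_{i_1}$ inside $I_{i_1}\cap I_{i_t}$. Relabelling if necessary so that $k_s<k_t$ then yields the pair $(s,t)$ required by the lemma.

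The main technical obstacle I foresee concerns the swap case, where one of $O_{i_1}$, $O_{i_t}$ is some $S_j$ and the two positions $j,j+1$ that it modifies are split between $I_{i_1}\setminus I_{i_t}$ and $I_{i_t}\setminus I_{i_1}$; then the intersection $I_{i_1}\cap I_{i_t}$ might contain no modified position of that swap, or might even be empty when the two intervals are adjacent but disjoint. In such borderline configurations the short arguments above do not immediately apply, and a finer case analysis is required: one must either exploit the $f$-freeness of $v$ to rule out the configuration (by locating a persisting $f$-occurrence in $v$) or select a different pair of operations from $\{i_1,\ldots,i_d\}$ satisfying the conclusion. This case analysis is where I expect the bulk of the technical work to lie.
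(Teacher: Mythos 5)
Your argument for statement (1) matches the paper's: since the transformation contains no two consecutive swaps, each $O_{i_j}$ can be applied directly to $u$, and if $O_{i_j}(u)$ were $f$-free then the pair $(O_{i_j}(u),v)$ would contradict either the minimality of ${\rm dist}_\sim(u,v)$ or, when $d=2$, the non-existence of an $f$-free transformation from $u$ to $v$; your explicit handling of the $d=2$ case is if anything more careful than the paper's one-line appeal to minimality.

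The gap is in statement (2). You fix the candidate pair as $(i_1, i_t)$, where $O_{i_t}$ is the first operation whose modified positions meet $I_{i_1}$, and then assert that a ``symmetric argument'' tracing how the occurrence at $I_{i_t}$ is destroyed produces a position of $M_{i_1}$ inside $I_{i_t}$. That step fails: the occurrence of $f$ at $I_{i_t}$ must indeed be destroyed by some operation, but nothing forces that operation to be $O_{i_1}$. Since $I_{i_t}$ may start anywhere up to position $i_t+1$, it can lie entirely to the right of the positions modified by $O_{i_1}$, in which case its occurrence is destroyed by some $O_{i_{t'}}$ with $t'>t$ and the pair $(i_1,i_t)$ simply does not satisfy the conclusion. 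You flag a difficulty only in the swap-splitting borderline case, but this failure already arises with plain replacements. The paper closes exactly this hole with a global crossing argument: for each $j$ it records $o(j)$, the smallest position of $I_{i_j}$ modified by $O_{i_j}$, and $p(j)$, the smallest position of $I_{i_j}$ modified by a \emph{different} operation; since $p(1)>o(1)$ while $p(d)<o(d)$, there is a first index $k$ with $p(k)<o(k)$, and the adjacent pair $(i_{k-1},i_k)$ --- not, in general, $(i_1,i_t)$ --- is the one whose two intervals both contain $o(k-1)$ and $o(k)$. Some selection device of this kind is indispensable; as written, your proof of (2) is incomplete.
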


\begin{proof} 
    Let $\mathcal{O}=\{O_{i_1}, O_{i_2}, \dots , O_{i_d}\}$ with $O_{i_j}\in \{ R_{i_j}, S_{i_j}  \}$ for any $j=1, 2, \dots, d$, and let 
$1\leq i_1< i_2< \dots  < i_d\leq m$.
Since the tilde-transformation does not contain two consecutive swaps, each $O_{i_j}$ can be applied to $u$. 
\\
If, for some $j\in \{1,\ldots, m\}$, $O_{i_j}(u)$ were $f$-free, then the pair $(O_{i_j}(u), v)$ would still be a pair of tilde-witnesses of length $m$, with $dist_\sim(O_{i_j}(u), v) < d$, against the hypothesis that $(u,v)$ are the tilde-witnesses of minimal distance. This proves statement (\ref{s_1}). 
\medskip
\\
Let $I_{i_j}$ be the interval where $f$ occurs in $O_{i_j}(u)$.
This interval contains at least one position modified by $O_{i_j}$, because $u$ is $f$-free;
let $o(j)$ denote the smallest position in $I_{i_j}$ modified by $O_{i_j}$.
Moreover, this occurrence of $f$ must disappear in a tilde-transformation from $u$ to $v$, because $v$ is $f$-free. Hence, $I_{i_j}$ contains a position modified by another operation in $\mathcal{O}$; let $p(j)$ denote the smallest such position.
Overall, there must exist $s,t \in \{i_1, i_2, \dots i_d\}$, such that  
$I_s$ contains at least one position modified by $O_{t}$ and 
$I_t$ contains at least one position modified by $O_{s}$.
To prove this, consider for any $j=1, \ldots, d$, the interval $I_{i_j}$ and the two positions 
$o(j)$ and $p(j)$
in $I_{i_j}$.
Note that $p(1)>o(1)$, whereas $p(d)<o(d)$.
Let $i_k$ be the smallest position in $\{i_1, i_2, \dots i_d\}$, such that 
$p(k)<o(k)$;
then $p(k-1)>o(k-1)$.
Since $I_{i_k}$ contains $o(k)$ and $p(k)$, with $p(k)< o(k)$, then $I_{i_k}$ also contains $o(k-1)$ 
because $p(k)\leq o(k-1)$. 
Then, $I_{i_k}$ and $I_{i_{k-1}}$ both contain $o(k-1)$ 
and $o(k)$ and can play the role of $I_s$ and $I_t$. 

Finally, without loss of generality we can suppose that $s<t$ and $k_s<k_t$. In fact, if this does not happen, exchange the roles of $u$ and $v$.\end{proof}

\begin{remark}\label{r:I_swap}
Suppose that $\mathcal{O}$ satisfy the hypotheses of Lemma \ref{l-Is&It}  and that contains a swap $S_i$. The swap operation $S_i$ modifies two positions, $i$ and $i+1$, and interval $I_i$ could contain just one of them. Then 
    $i-|f|+1 \leq k_i\leq i+1$.
\end{remark}
 
\begin{lemma}\label{l-blocchi-esclusi}
    Let $f$ be a tilde-non-isometric word and let $(u,v)$ be a pair of tilde-witnesses for $f$ of minimal distance
    ${\rm dist}_{\sim}(u,v)$. Let $O_s, O_t$ be as in Lemma \ref{l-Is&It},  then
    \begin{itemize}
    \item\label{l-101} If ($O_s=R_s$ and $O_t=S_{s+1}$) or ($O_s=S_s$ and $O_t=R_{s+2}$) then 
    \begin{center}
        ${u[s..s+2] \choose v[s..s+2]} \neq {101 \choose 010}$;
    \end{center}
        \item\label{l-100} If $O_s=R_s$ and $O_t=R_{s+2}$ then 
        \begin{center}
            ${u[s..s+2] \choose v[s..s+2]} \neq {100 \choose 001}$
        \end{center}
    \end{itemize} 
\end{lemma}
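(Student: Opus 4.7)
My plan is to prove both parts by contradiction, using the substitution rules from Remark \ref{r-unici-blocchi-ambigui-dist2} together with the minimality of ${\rm dist}_\sim(u,v)$. In each case I assume the forbidden block occurs with the stated operations, replace the pair $(O_s,O_t)$ by an equivalent pair producing the same block at positions $s..s+2$, and then force an occurrence of $f$ to appear inside $u$, contradicting its $f$-freeness.

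For the first claim, suppose towards contradiction that $u[s..s+2]/v[s..s+2] = 101/010$ and $(O_s,O_t)=(R_s,S_{s+1})$; the symmetric case $(S_s,R_{s+2})$ is handled by the inverse substitution. Since $u[s]=u[s+2]=1$, Remark \ref{r-unici-blocchi-ambigui-dist2} lets me replace $(R_s,S_{s+1})$ in $\mathcal{O}$ by $(S_s,R_{s+2})$, producing another tilde-transformation from $u$ to $v$ of the same length $d$. I would then argue, independently of Lemma \ref{l-Is&It}, that $S_s(u)$ must contain $f$: were $S_s(u)$ instead $f$-free, then either $(S_s(u),v)$ would be a tilde-witness pair of distance $<d$ (contradicting minimality), or there would exist an $f$-free tilde-transformation $S_s(u)\to v$, which prepended by $u\to S_s(u)$ yields an $f$-free tilde-transformation $u\to v$, contradicting that $(u,v)$ is a tilde-witness pair. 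Since $R_s(u)$ and $S_s(u)$ agree at every position except $s+1$, comparing the $f$-occurrence in $R_s(u)$ at the interval $I_s$ (granted by Lemma \ref{l-Is&It}) with a new $f$-occurrence at some $I'_s$ in $S_s(u)$ should pinpoint an occurrence of $f$ in $u$ itself, via a case split on whether $s+1\in I_s$ and on the relative positions of $I'_s$ and $\{s,s+1\}$, contradicting $f$-freeness of $u$.

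The second claim proceeds in the same spirit: assuming $u[s..s+2]/v[s..s+2]=100/001$ and $(O_s,O_t)=(R_s,R_{s+2})$, Remark \ref{r-unici-blocchi-ambigui-dist2} licenses the substitution by $(S_s,S_{s+1})$. The intermediate word $S_s(u)$ now differs from $u$ at positions $s,s+1$ and from $R_s(u)$ only at $s+1$. The same minimality argument as above forces $S_s(u)$ to contain $f$; combining this with the $f$-occurrence in $R_s(u)$ provided by Lemma \ref{l-Is&It}, and with the overlap structure of $I_s,I_t$ ensured by Lemma \ref{l-Is&It}(\ref{s_2}) (which in this configuration pins the common position to $s$, since $s+1$ is modified by neither operation), once more relocates an $f$-occurrence inside $u$.

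The main obstacle I anticipate is the careful positional bookkeeping of the several $f$-occurrences: one must track the positions $k_s,k'_s,k_t$, determine whether the single coordinate $s+1$ at which the two candidate intermediate words differ lies inside each of the relevant intervals, and then combine the resulting factor equalities with the known block $u[s..s+2]\in\{101,100\}$ to produce a concrete $f$-occurrence inside $u$. A secondary technical point in the second claim is that the substitute tilde-transformation contains two swaps at consecutive positions, so Lemma \ref{l-Is&It} does not directly apply to it; the minimality-of-distance argument sketched above is employed precisely to bypass this obstruction.
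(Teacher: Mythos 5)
Your opening moves coincide with the paper's: you invoke Remark \ref{r-unici-blocchi-ambigui-dist2} to obtain the alternative tilde-transformation, and you use minimality of ${\rm dist}_\sim(u,v)$ to force the alternative first-step words to contain $f$ (this is exactly how the paper handles the fact that Lemma \ref{l-Is&It} does not directly apply to a transformation containing $S_s,S_{s+1}$). The gap is in the combinatorial core, which you defer as ``positional bookkeeping'' but which is the actual content of the lemma. For the first claim you propose to derive the contradiction by comparing only \emph{two} occurrences of $f$, one in $R_s(u)$ and one in $S_s(u)$. That cannot work: those two facts are jointly consistent with $u$ being $f$-free. Concretely, take $f=0011$ and $u=1010110$ with $s=3$, so that $u[s..s+2]=101$; then $u$ is $f$-free, $R_3(u)=1000110$ contains $f$ at position $3$ (covering $s$), and $S_3(u)=1001110$ contains $f$ at position $2$ (covering $s$ and $s+1$). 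No case split on the relative positions of these two intervals can therefore ``pinpoint an occurrence of $f$ in $u$.'' The paper's proof needs a \emph{third} occurrence, namely the one in $S_{s+1}(u)$ (which in my example fails to exist, consistently with the lemma): it orders the three occurrences $f^1,f^2,f^3$ by starting position, excludes the degenerate shifts $q\in\{0,1,2\}$, and then propagates a bit value along the chain $u[p+i]=f^2[i]$, $f^1[p+i]=u[p+i]$, $u[r+p+i]=f^3[p+i]$ against $u[r+p+i]=f^2[r+i]$ to reach a contradiction of the form $u[r+p+i]=a$ and $u[r+p+i]=\overline a$.

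Note also that the paper's contradiction is this bit-level clash, not the relocation of a full occurrence of $f$ inside $u$ that you aim for; it is not clear the latter target is reachable at all. For the second claim your ingredient list ($R_s(u)$, $R_{s+2}(u)$, $S_s(u)$) does match the paper's three occurrences, but again the derivation is entirely absent, and it is not routine: one must first show that the middle interval contains all of $s,s+1,s+2$, rule out $q=1$ and $q=2$ by a separate argument (for $q=2$ the paper needs the periodicity observation $f[h+1]=\overline{f[h]}$), and only then run the propagation. As written, the proposal establishes the framing of the proof but not the proof itself.
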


\begin{figure}[t]
    \begin{center}
	\begin{tikzpicture}
		\node[anchor=west] at (7,2.4){$u$};
		\node at (2.2,2.55){$1$};
		\node at (2.6,2.55){$0$};
		\node at (3,2.55){$1$};
		\node at (2.2,2.8){\tiny $s$};
		\node at (2.6,2.8){\tiny $s$+$1$};
		\node at (3.1,2.8){\tiny $s$+$2$};
		\node at (0,1.8){$f^3$};
            \node[anchor=west] at (7,1.8){$O^3$=$S_s$};
		\node at (2.2,1.95){$0$};
		\node at (2.6,1.95){$1$};
		\node at (3,1.95){$1$};
		\node at (3.6,1.95){$1$};
		\node at (1.05,1.95){\scriptsize $r$};
		\node at (3,2.2){\tiny $i$};
		\node at (3.9,2.2){\tiny $f^3[p$+$i]$};
            \node at (0,1.2){$f^2$};
		\node[anchor=west] at (7,1.2){$O^2$=$R_s$};
		\node at (2.2,1.35){$0$};
		\node at (2.6,1.35){$0$};
		\node at (3,1.35){$1$};
		\node at (1.6,1.35){$1$};
		\node at (3.6,1.35){$0$};
		\node at (1.5,1.6){\tiny $f^2[i]$};
		\node at (3.9,1){\tiny $f^3[r$+$i]$};
		\node at (0.55,1.35){\scriptsize $p$};
        \node at (0,0.6){$f^1$};		
        \node[anchor=west] at (7,0.6){$O^1$=$S_{s+1}$};
		\node at (2.2,0.75){$1$};
		\node at (2.6,0.75){$1$};
		\node at (3,0.75){$0$};
		\node at (1.6,0.75){$1$};
		\node at (3.3,0.4){\tiny $f^1[r$+$i]$};
		\node at (1.6,0.4){\tiny $f^1[p$+$i]$};
		\draw [|-|] (0.3,0.6) -- (4.3,0.6);
		\draw [|-|] (0.8,1.2) -- (5.05,1.2);
		\draw [|-|] (1.8,1.8) -- (5.8,1.8);
		\draw [|-|] (0.3,2.4) -- (6.8,2.4);
		\draw [dashed][<->] (0.3,1.2) -- (0.8,1.2);
		\draw [dashed][<->] (0.3,1.8) -- (1.8,1.8);
		\draw (3.45,1.2) rectangle (3.75,2.1);
		\draw (1.45,0.6) rectangle (1.75,1.5);

	\end{tikzpicture}
    \end{center}
    \caption{The representation of the three occurrences $f^1$, $f^2$, and $f^3$ of $f$ in $O^1(u)$, $O^2(u)$ and $O^3(u)$, respectively, when $u[s..s+2]=101$.}
    \label{f-101}	
\end{figure}

\begin{proof}
First, consider the case that
$O_s=R_s$, $O_t=S_{s+1}$. If ${u[s..s+2] \choose v[s..s+2]} = {101 \choose 010}$ we write $u=u_1101u_2$ and $v=v_1010v_2$, with $\vert u_1 \vert=s-1$. 

Then we can obtain another tilde-transformation from $u$ to $v$ by replacing $(R_s, S_{s+1})$ with $(S_s,R_{s+2})$ (see Remark \ref{r-unici-blocchi-ambigui-dist2}). By definition we have:
\begin{equation}\label{eq:blocchi}
\begin{split}
    R_s(u_1{\bf101}u_2)=u_1{\bf001}u_2, \,\,\,\,\, S_{s+1}(u_1{\bf101}u_2)=u_1{\bf110}u_2,\\  
    S_s(u_1{\bf101}u_2)=u_1{\bf011}u_2,\,\,\,\,\, R_{s+2}(u_1{\bf101}u_2)=u_1{\bf100}u_2
\end{split}
\end{equation}
The  proof works considering only $R_s$, $S_{s}$ and $S_{s+1}$.
By Lemma \ref{l-Is&It}, for each $O\in \{R_s, S_s, S_{s+1}\}$,  $O(u)$ has $f$ as factor. Therefore, one can sort the operations in the set $\{R_s, S_s, S_{s+1}\}$ according to the increasing order of the  positions where $f$ occurs as factor in $R_s(u)$, $S_{s+1}(u)$ and  $S_s(u)$, respectively. Hence, let $O^1, O^2, O^3$ be the sorted sequence of the operations, $f^1, f^2, f^3$ denote the relative occurrences of $f$ at positions $k_1 < k_2< k_3$ and $I_1$, $I_2$ and $I_3$ be the intervals of occurrence, respectively.
 Then, either ($O^3(u)[s+1]=a$ and $O^1(u) [s+1]=\overline{a}$) or ($O^3(u)[s+2]=a$ and $O^1(u) [s+2]=\overline{a}$), with $a \in \Sigma$. This fact is evident from Equation (\ref{eq:blocchi}) where the bits $s, s+1, s+2$ are in bold. In what follows, without loss of generality, we can consider $k_1=1$ and denote $r=k_3-1$, $p=k_2-1$ and $q =r-p$. 

We prove the statement in the case $O^1=S_{s+1}$, $O^2=R_{s}$ and $O^3=S_s$. The proofs for all other cases apply the same technique. In such a case $O^3(u)[s+2]=1$ and $O^1(u) [s+2]=0$ (see Fig. \ref{f-101}). Note that $I_2$ contains $s$, $s+1$ and $s+2$. 
Indeed, by Lemma \ref{l-Is&It}, $I_2$ contains $s$ and, at least, $s+1$, but if $I_2$ does not contain $s+2$ then $I_1$ does not contain $s+1$ (because $k_1 < k_2$) and ends in $s$. This is not possible because  $u$ would contain $f^1$, i.e. $u$ is not $f$-free. For the same reason, $k_3 \leq s+1$ because $u$ is $f$-free. Let $i=s+2-r$ be the position of $f^3$ corresponding to position $s+2$ of $u$ (see Fig.  \ref{f-101}); in other words, $i-2$, $i-1$ and $i$ are the positions of $f^3$ corresponding to $s$, $s+1$ and $s+2$, respectively.
Note that $f[i]=f^1[i]=f^2[i]=f^3[i]=1$ and $f[r+i]=f^3[r+i]=f^1[r+i]=f^2[r+i]=0$.
Observe that $q \neq 2$, and $q \neq 1$, indeed, $f[i]=f^3[i]=1$ and if either $q=1$ or $q=0$ then $f[i]=f^2[i]=0$, a contradiction.

Since $f^2$ occurs in $R_s(u)$, $f^2$ matches the corresponding positions of $u$ in $I_2$, unless for the one where the replacement is applied, that is $s$. More precisely,
for each $h \in I_2$ with $h \neq q+i-2$ we have $u[p+h]=f^2[h]$.

It follows that $u[p+i]=f^2[i]=1$, because $q \neq 2$. 

Moreover, since $f^1$ occurs in $S_{s+1}(u)$, $f^1$ matches the corresponding positions of $u$ in $I_1$, except for the positions involved in the swap $S_{s+1}$, i.e. $s+1$ and $s+2$. More precisely, for each $h \in I_1$,  with $h \neq r+i-1$ and $h \neq r+i$,
we have $u[h]=f^1[h]$.
It follows that $u[p+i]=f^1[p+i]=1$, because $q > 1$. 

 
Since $f^3$ occurs in $S_s(u)$, 
it matches the factor of $u$ occurring at $I_3$, except for the positions involved in the swap $S_s$, i.e. the ones corresponding to $s$ and $s+1$; in other words, 
 $u[r+h]=f^3[h]$, 
for each $h \in I_3$, with $h \neq i-2$ and $h \neq i-1$.

It follows that $u[r+p+i]=f^3[p+i]=1$ because $p > 0$.  
But $u[r+p+i]=f^2[r+i]=0$ because $p>0$. Then a contradiction follows.

The previous considerations can be extended to the other cases where $O^3(u)[s+2]=a$ and $O^1(u) [s+2]=\overline{a}$ getting the contradiction that $u[r+p+i]=a$ and $u[r+p+i]=\overline{a}$, with $i=s+2-r$. 
On the other hand, if $O^3(u)[s+1]=a$ and $O^1(u) [s+1]=\overline{a}$ the contradiction that $u[r+p+i]=a$ and $u[r+p+i]=\overline{a}$, can be similarly obtained for  $i=s+1-r$.

\smallskip

Now, consider the case $O_s=R_s$ and $O_t=R_{s+2}$. If ${u[s..s+2] \choose v[s..s+2]} = {100 \choose 001}$ we write $u=u_1100u_2$ and $v=v_1001v_2$, with $\vert u_1\vert =s-1$. Then we can can obtain another tilde-transformation from $u$ to $v$ by replacing  $(R_s, R_{s+2})$ with $(S_s,S_{s+1})$ (see Remark \ref{r-unici-blocchi-ambigui-dist2}). By definition:
\begin{equation}\label{eq:blocchi_2}
\begin{split}
    R_s(u_1{\bf100}u_2)=u_1{\bf000}u_2, \,\,\,\,\, R_{s+2}(u_1{\bf100}u_2)=u_1{\bf101}u_2,\\ 
S_s(u_1{\bf100}u_2)=u_1{\bf010}u_2,\,\,\,\,\,\,\,\,\,\,\,\,\,\,\,\,\,\,\,\,\,\,\,\,\,\,\,\,\,\,\,\,\,\,\,\,
\end{split}
\end{equation}

One can sort the set of operations $\{R_s, R_{s+2}, S_s\}$  as before mentioned and denote by $k_1 < k_2 < k_3$ the start positions of $f^1$, $f^2$ and $f^3$, and $I_1$, $I_2$ and $I_3$  the intervals of occurrence, respectively. From Equation (\ref{eq:blocchi_2}), observe that either ($O^3(u)[s+1]=a$ and $O^1(u)[s+1]=\overline{a}$) or ($O^3(u)[s+2]=a$ and $O^1(u)[s+2]=\overline{a}$). Without loss of generality we can consider $k_1=1$, $r=k_3-1$, $p=k_2-1$ and $q=r-p$. 

We prove the statement for the order $O^1=R_{s+2}$, $O^2=S_s$ and $O^3=R_s$ (see Fig. \ref{f-100}). Note that $O^3(u)[s+1]=1$ 
and $O^1(u)[s+1]=0$. By Lemma \ref{l-Is&It}, both $I_1$ and $I_3$ contain both $s$ and $s+2$. Therefore $I_2$ must contain $s$, $s+1$ and $s+2$.  Let $i=s+2-r$, $f[i]=f^1[i]=0=f^2[i]=f^3[i]$ and $f[r+i]=f^3[r+i]=1=f^1[r+i]=f^2[r+i]$.

\begin{figure}[t]
    \begin{center}
	\begin{tikzpicture}
		\node[anchor=west] at (7,2.4){$u$};
		\node at (2.2,2.55){$1$};
		\node at (2.6,2.55){$0$};
		\node at (3,2.55){$0$};
		\node at (2.2,2.8){\tiny $s$};
		\node at (2.6,2.8){\tiny $s$+$1$};
		\node at (3.1,2.8){\tiny $s$+$2$};
		\node at (0,1.8){$f^3$};
        \node[anchor=west] at (7,1.8){$O^3$=$R_s$};
		\node at (2.2,1.95){$0$};
		\node at (2.6,1.95){$0$};
		\node at (3,1.95){$0$};
		\node at (3.6,1.95){$0$};
		\node at (1.05,1.95){\scriptsize $r$};
		\node at (3,2.2){\tiny $i$};
		\node at (3.9,2.2){\tiny $f^3[p$+$i]$};
		\node at (0,1.2){$f^2$};
		\node[anchor=west] at (7,1.2){$O^2$=$S_s$};
		\node at (2.2,1.35){$0$};
		\node at (2.6,1.35){$1$};
		\node at (3,1.35){$0$};
		\node at (1.6,1.35){$0$};
		\node at (3.6,1.35){$1$};
		\node at (1.5,1.6){\tiny $f^2[i]$};
		\node at (3.9,1){\tiny $f^3[r$+$i]$};
		\node at (0.55,1.35){\scriptsize $p$};
		\node at (0,0.6){$f^1$};
		\node[anchor=west] at (7,0.6){$O^1$=$R_{s+2}$};
		\node at (2.2,0.75){$1$};
		\node at (2.6,0.75){$0$};
		\node at (3,0.75){$1$};
		\node at (1.6,0.75){$0$};
		\node at (3.3,0.4){\tiny $f^1[r$+$i]$};
		\node at (1.6,0.4){\tiny $f^1[p$+$i]$};
		\draw [|-|] (0.3,0.6) -- (4.3,0.6);
		\draw [|-|] (0.8,1.2) -- (5.05,1.2);
		\draw [|-|] (1.8,1.8) -- (5.8,1.8);
		\draw [|-|] (0.3,2.4) -- (6.8,2.4);
		\draw [dashed][<->] (0.3,1.2) -- (0.8,1.2);
		\draw [dashed][<->] (0.3,1.8) -- (1.8,1.8);
		\draw (3.45,1.2) rectangle (3.75,2.1);
		\draw (1.45,0.6) rectangle (1.75,1.5);
	\end{tikzpicture}
    \end{center}
    \caption{The representation of the three occurrences $f^1$, $f^2$, and $f^3$ of $f$ in $O^1(u)$, $O^2(u)$ and $O^3(u)$, respectively, when $u[s..s+2]=100$.}
    \label{f-100}	
\end{figure}
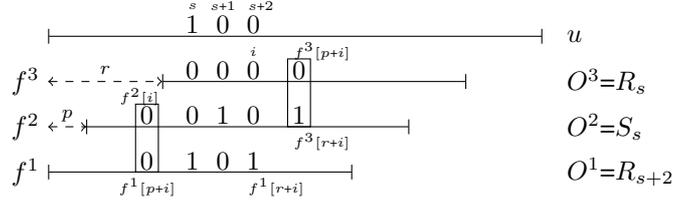

First, let us remark that $q \neq 2$ and $q \neq 1$. If $q=2$, one can easily verify that $f[h+1]=\overline{f[h]}$, for each $h>i$. By hypothesis $f^2[i+1]=1$ then $f^3[i+3]=1$ then $f^1[p+i+3]=1$, but $f^1[p+i+2]=1$ by hypothesis. A contradiction follows.
If $q=1$ then $f[i]=f^3[i]=0$ and $f[i]=f^2[i]=1$, a contradiction.

Since $f^2$ occurs in $S_s(u)$, $f^2$ matches the corresponding positions of $u$ in $I_2$ except for the positions involved in the swap $S_s$, i.e. $s$ and $s+1$. Hence, for each $h \in I_2$ with $h \neq q+i-2$ and $h \neq q+i-1$ we have $u[p+h]=f^2[h]$.

It follows that $u[p+i]=f^2[i]=1$, because $q \neq 2$ and $q \neq 1$. 
Moreover, since $f^1$ occurs in $R_{s+2}(u)$, $f^1$ matches the corresponding positions of $u$ in $I_1$ unless for $s+2$ where the replacement is applied. More formally, for each $h \in I_1$,  with $h \neq i$ we have $u[h]=f^1[h]$.
It follows that $u[p+i]=f^1[p+i]=1$, because $p > 0$.

On the other hand $f^1[p+i]=f^3[p+i]$. Since $f^3$ occurs in $R_s(u)$, $f^3$ matches the corresponding positions of $u$ in $I_3$ unless for $s$ where the replacement is applied. In other words, for each $h \in I_3$ with $h \neq i-2$ we have $u[r+h]=f^3[h]$. It follows that $u[r+p+i]=f^3[p+i]$ because $p >0$.

But $u[r+p+i]=f^2[r+i]=0$ because $p>0$. Then a contradiction follows.\end{proof}

 \subsection{The necessary condition}\label{s:viceversa}
  Once we have collected all the tools, we are now ready to prove  the  {\em only if}   direction of Theorem \ref{t:characterization}.

\begin{proposition}\label{p:necessary}
   If  $f \in\Sigma ^n$ is tilde-non-isometric then one among  (C0), (C1), (C2), (C3), (C4), (C5)  cases of Theorem \ref{t:characterization} occurs (up to complement, reverse and exchange of rows).
\end{proposition}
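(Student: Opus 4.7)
The plan is to start from a pair $(u,v)$ of tilde-witnesses of minimal distance $d = {\rm dist}_\sim(u,v)\geq 2$ among tilde-witnesses of length $m$, extract a tilde-error overlap of $f$ from two colliding occurrences of $f$ in the transformation, and then match that overlap to one of the cases (C0)--(C5). First I would fix such a pair $(u,v)$ and a tilde-transformation from $u$ to $v$ that contains no two consecutive swaps (shortening any longer swap run into replacements if needed). By Lemma \ref{l-Is&It}, there exist indices $s<t$ among the operation positions with $k_s<k_t$ such that the occurrence intervals $I_s,I_t$ of $f$ in $O_s(u)$ and $O_t(u)$ satisfy that $I_s\cap I_t$ contains at least one position modified by $O_s$ and at least one modified by $O_t$.

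Setting $r = k_t-k_s$ and $\ell = n-r$, the region $I_s\cap I_t = [k_t,\,k_s+n-1]$ has length $\ell$ and carries ${\rm suf}_\ell(f)$ in $O_s(u)$ and ${\rm pre}_\ell(f)$ in $O_t(u)$. Hence $f$ has an overlap of length $\ell$ and shift $r$, and the tilde-distance $q$ between its prefix and suffix satisfies $q\leq 2$ (one can pass from $O_s(u)$ to $O_t(u)$ by undoing $O_s$ and applying $O_t$, localized in the overlap region) and $q\geq 1$ (both $O_s$ and $O_t$ modify positions in the overlap). So $f$ has a $q$-tilde-error overlap with $q\in\{1,2\}$, and the remaining task is to classify its shape.

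The main work is a case analysis on the types and relative offsets of $O_s$ and $O_t$. For $q=1$, the only configuration compatible with minimality of the transformation is $O_s=R_s$, $O_t=R_{s+1}$ with $u[s]=u[s+1]$, which produces a block in $\mathcal{B}_0$ and yields (C0). For $q=2$ with non-adjacent errors, the overlap falls under (C1); in particular, when $(O_s,O_t)=(R_s,R_{s+2})$, the excluded overlap pattern ${\$x\choose ax}{000\choose 101}{yb\choose y\$}$ corresponds exactly to the block ${100\choose 001}$ in ${u\choose v}$, which Lemma \ref{l-blocchi-esclusi} forbids, so this configuration cannot occur for a minimal witness. For $q=2$ with adjacent errors, the shape of the overlap block depends on the precise combination: configurations of type $(R_s,S_{s+1})$ or $(S_s,R_{s+2})$ produce the ${010\choose 101}$ block of case (C3), with Lemma \ref{l-blocchi-esclusi} excluding the symmetric ${101\choose 010}$ block in ${u\choose v}$ that would provide an alternative $f$-free transformation; configurations spanning a wider window (e.g.\ $(S_s,S_{s+2})$ with both swaps fully inside the overlap) yield the $4$-character blocks $\mathcal{B}$-patterns of case (C2); and configurations where the error block is forced against the start or end of $f$, so that the bordering characters $a$ or $b$ degenerate into the delimiter $\$$, produce the boundary cases (C4) and (C5).

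The main obstacle is the meticulous enumeration of all combinations of operation types (replacement/swap) and relative offsets for $(O_s,O_t)$, including the subtle edge effects from Remark \ref{r:I_swap} in which only one of the two positions modified by a swap lies inside the overlap region. One must also use Remark \ref{r-unici-blocchi-ambigui-dist2} to track the alternative equivalent transformations at $\mathcal{B}_1$ and $\mathcal{B}_2$ blocks, so that Lemma \ref{l-blocchi-esclusi} can be invoked uniformly to discard the shortcut blocks that would otherwise yield extraneous overlap patterns. Once each combination is matched to the corresponding pattern among (C0)--(C5), the proposition follows.
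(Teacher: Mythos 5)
Your proposal follows essentially the same route as the paper: fix a minimal-distance pair of tilde-witnesses, invoke Lemma \ref{l-Is&It} to obtain the two overlapping occurrences $I_s, I_t$, read off a $1$- or $2$-tilde-error overlap of shift $k_t-k_s$, and then enumerate the operation types and offsets of $(O_s,O_t)$, using Lemma \ref{l-blocchi-esclusi} to rule out the ${100\choose 001}$ and ${101\choose 010}$ shortcut blocks and Remark \ref{r:I_swap} for the boundary degenerations into (C0), (C4), (C5). The paper organizes the enumeration by the number of characters interleaving the modified positions rather than by $q$ first, but the case content and conclusions coincide with yours.
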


\begin{proof}
    Let $f$ be a tilde-non-isometric word, $(u,v)$ be a pair of tilde-witnesses for $f$ with minimal distance among the pairs of tilde-witnesses of minimal length. 
    Consider a tilde-transformation from $u$ to $v$ that does not contain two swaps at two consecutive positions. 
    
    Recall that if a tilde-transformation contains  swaps 
    $S_i$ and $S_{i+1}$ for some position $i$ in $u$ then they can be substituted by replacements $R_i$ and $R_{i+2}$ still keeping the total number of operations.
 Throughout the proof let $s, t, k_s, k_t, O_s, O_t, I_s, I_t$, with $k_s<k_t$, be as in Lemma \ref{l-Is&It}.
    
    Further suppose that $u[s]=c$ and $u[t]=d$ for some $c, d\in \Sigma$.

    First, consider the case where the characters modified by $O_s$ and $O_t$ are not adjacent. 
    
    If there are at least two characters of $u$ interleaving the characters modified by $O_s$ and the ones modified by $O_t$ then $f$ has a  $2$-tilde-error overlap with non-adjacent errors and any of its blocks containing all the error/ modified positions has length strictly greater than $3$, thus falling in case $(C1)$.

    Suppose now that there is only one character of $u$ interleaving the characters modified by $O_s$ and the ones modified by $O_t$. Then, four situations may occur.
    \begin{enumerate}
    	\item $O_s=R_s, O_t=R_t$ with $t=s+2$ and $u[s+1]=v[s+1]$. Let  $u[s+1]=e$ and $u=wcedz$ with $w, z\in \Sigma^*$.
 Then, $O_s(u)=w\overline{a}cbz$ and $O_t(u)=w ac\overline{b}z$ and, since $I_s$ and $I_t$ must share the positions $s$ and $t$ then $f$ has a $2$-tilde-error overlap with block $\overline{c}ed \choose ce\overline{d}$.     	
    	Applying Lemma \ref{l-blocchi-esclusi},  
${u[s..s+2] \choose v[s..s+2]} \neq {100 \choose 001}$ and then 
${\overline{c}ed \choose ce\overline{d}} \neq {000\choose 101}$,
 
     thus falling in case $(C1)$. 
   	\item $O_s=S_s, O_t=R_t$ with $t=s+3$  and $u[s+2]=v[s+2]$.
    	 Let  $u[s+2]=e$ and $u=wc\overline{c}edz$ with $w, z\in \Sigma^*$.
    	Then, $O_s(u)=w\overline{c}cedz$ and $O_t(u)=w c\overline{c}e \overline{d}z$.
    	If $I_s\cap I_t$ includes $s, s+1, s+2, s+3$ then $f$ has a  $2$-tilde-error overlap with non-adjacent error positions and any of its blocks containing the error positions has length strictly greater than $3$, thus falling in case $(C1)$.
    	If $I_s\cap I_t$ includes $s+1, s+2, s+3$, but not $s$, then $f$ 
    	has a  $2$-tilde-error overlap with non-adjacent error positions with block $\overline{c}e \overline{d} \choose ced$ where 
    	$\overline{c}e \overline{d}$ is a prefix of $f$. 
    	If $c=d$ 
    	then the block $000\choose 101$, may occur in the $2$-tilde-error overlap,      
     but the $2$-tilde-error overlap is not   
    ${\$x \choose ax}{000\choose 101}{yb \choose y\$}$ with 
    $x\neq\varepsilon$,
    thus falling in case $(C1)$.
   	\item $O_s=R_s, O_t=S_t$ with $t=s+2$  and $u[s+1]=v[s+1]$. This case is the symmetric of the previous one. Indeed, $f$ has a $2$-tilde-error overlap with non-adjacent error positions where the block $000 \choose 101$ may occur, but in this case $000$ is a suffix of $f$,    	
    	  thus falling again in case $(C1)$. 
    	\item $O_s=S_s, O_t=S_t$ with $t=s+3$  and $u[s+2]=v[s+2]$. 
    	Similarly as in the two previous cases, if 
    	$I_s\cap I_t$ includes $s+1, s+2, s+3$, but not $s$ and not $s+4$ then $f$ 
    	has a  $2$-tilde-error overlap with non-adjacent error positions where the block $000 \choose 101$ may occur, but in this case $000$ is a prefix of $f$ and $101$ is a 	suffix of $f$ thus falling again in case $(C1)$.    	
        \end{enumerate}
      \smallskip
    Suppose now that there is no character of $u$ interleaving the characters modified by $O_s$ and the ones modified by $O_t$.
    
    Again, four situations may occur.
    
    \begin{enumerate}
    \item $O_s=R_s, O_t=R_t$ with $t=s+1$. Then, $u[s]=u[s+1]$,
    otherwise $R_s$ and $R_t$ could be replaced in the tilde-transformation from $u$ to $v$ with
     a single swap $S_s$ obtaining a tilde-transformation with a less number of operations. Hence, suppose $u=w00z$; then,
     $O_s(u)=R_s(u)=w10z$ and $O_{s+1}(u)=R_{s+1}(u)=w01z$, and since the intervals $I_s$ and $I_t$ must share the positions $s$ and $t$, then $f$ has a $1$-tilde-error overlap ${\$x\choose ax}{01\choose 10}{ya\choose y\$}$ with $x,y\in \Sigma^*$, $a\in \Sigma$, as in (C0).

    \item \label{case2} $O_s=S_s, O_t=R_t$ with $t=s+2$. In this case $u=wc\overline{c}dz$, with $w,z\in \Sigma^*$ and $c,d\in\Sigma$. Suppose $c=1$ (then $\overline{c}=0$). By Lemma \ref{l-blocchi-esclusi}, $d\neq 1$, because
${u[s..s+2] \choose v[s..s+2]} \neq {101 \choose 010}$.
    On the other hand, if $d=0$ then 
    ${u[s..s+2] \choose v[s..s+2]} = {100 \choose 011}$ and $O_s(u)=w010z$ and $O_t(u)=w101z$.  
    If $I_s \cap I_t$ includes $s$, $s+1$ and $s+2$,  
    then (C3) is verified.
    If $I_s \cap I_t$ includes only $s+1$ and $t=s+2$, then $f$ has a $1$-tilde-error overlap 
    ${\$x\choose ax}{01\choose 10}{ya\choose y\$}$, 
    as in (C0).
   
    \item $O_s=R_s, O_t=S_t=S_{s+1}$. The proof is equivalent to the previous case by considering the reverses of $u$ and $v$ in the above proof.

    \item $O_s=S_s, O_t=S_t$ with $t=s+2$.
    In this case $u=wc\overline{c}d\overline{d}z$.
     Suppose $c=1$. Then two cases occur: $d=1$ and $d=0$. 
        \smallskip\\
        If $d=1$ then $u=w1010z$ and $O_s(u)=S_s(u)=w0110z$ and $O_t(u)=S_t(u)=S_{s+2}(u)=w1001z$. If all positions $s, s+1, t=s+2, t+1=s+3$ are included in $I_s \cap I_t$ then case (C2) is verified. 
        Consider now the case where $s, s+1, s+2 \in I_s \cap I_t $ and $s+3\not \in I_s \cap I_t$. This means that $I_s$ ends in position $s+2$, and $f$ has a $2$-tilde-error overlap 
        ${\$x\choose ax}{100\choose 011}{1\choose \$}$, as in case (C4), up to complement.
        \\In a similar way, if $s+1, s+2, s+3\in I_s \cap I_t$ and $s\not\in I_s \cap I_t$, then
        $I_t$ starts in position $s+1$ and 
         $f$ has a $2$-tilde-error overlap 
         ${\$\choose 0}{001\choose 110}{xa\choose x\$}$, as in case (C4), up to reverse and exchange of rows. 
        \\ The last situation is when  $s+1, s+2 \in I_s\cap I_t$ and $s, s+3\not\in I_s\cap I_t$ then $f$ starts with $001$ and ends with $011$, providing a $2$-tilde-error overlap as in case (C5).
    \smallskip\\
    Consider now $d=0$, then $u=w1001z$. Again, we have four cases. If $s, s+1, s+2, s+3\in I_s \cap I_t$ then case (C2) is verified.
    \\If $s, s+1, s+2\in I_s \cap I_t$ and $s+3\not\in I_s \cap I_t$, then $f$ has a $2$-tilde-error overlap
     ${\$x\choose ax}{101\choose 010}{0\choose \$}$, as in case (C3), with $y=\varepsilon$ and $b=0$ and up to complement.
    \\ If $s+1, s+2, s+3 \in I_s \cap I_t$ and $s \not \in I_s \cap I_t$, then $f$ has a 2-tilde-error overlap
     ${\$\choose 0}{010\choose 101}{yb\choose y\$}$, 
      as in case (C3), with $x=\varepsilon$ and $a=0$.
    \\ If $s+1, s+2 \in I_s \cap I_t$ and $s, s+3 \not \in I_s \cap I_t$, then $f$ starts with $01$ and ends with $10$ corresponding to case (C0) with $x, y=\varepsilon$. 
    
    \end{enumerate}\end{proof}

\subsection{Construction of tilde-witnesses}\label{s-witness}
As already discussed, in order to prove that a word is tilde-non-isometric it is sufficient to exhibit a pair of tilde-witnesses.
As a preparation for the sufficient condition of the characterization theorem we introduce a technique to define a pair  of tilde-witnesses for a word, starting from its tilde-error overlaps.

\begin{definition}
Let $f\in\Sigma^*$ have a $2$-tilde-error overlap of shift $r$, length $\ell=n-r$, and let $(O_i,O_j)$, $1\leq i<j \leq\ell$, be a tilde-transformation of ${\rm pre}_{\ell}(f)$ in ${\rm suf}_{\ell}(f)$,  with $(O_i,O_j) \neq (S_i, S_{i+1})$.

If $(O_i,O_j) \neq (R_i, S_{i+1}), (S_i, R_{i+2})$ then
\begin{equation} \label{eq:alfa_beta_tilde}
     \tilde\alpha_r(f)= {\rm pre}_r(f) O_i(f) \,\, \mbox{and} \,\, \tilde\beta_r(f)= {\rm pre}_r(f) O_j(f)
  \end{equation}
 \hspace{0.7cm} else
\begin{equation} \label{eq:alfa_beta_101}
     \tilde\alpha_r(f)= {\rm pre}_r(f) S_i(f) \,\, \mbox{and} \,\, \tilde\beta_r(f)= {\rm pre}_r(f) R_{i+2}(f)
  \end{equation}
\end{definition}

\noindent When there is no ambiguity, we omit $f$ and we simply write $\tilde\alpha_r$ and $\tilde\beta_r$.

\begin{lemma}\label{l-alfa-tilde-f-free}
    If $f$ has a $2$-tilde-error overlap of shift $r$  then $\tilde\alpha_r(f)$ is $f$-free.
\end{lemma}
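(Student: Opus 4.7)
My plan is a proof by contradiction: I assume $f$ occurs as a factor of $\tilde\alpha_r$ and rule out every candidate starting position. Write $\tilde\alpha_r={\rm pre}_r(f)\cdot g$, where $g=O_i(f)$ in the generic branch of the definition and $g=S_i(f)$ in the special branch (when $(O_i,O_j)\in\{(R_i,S_{i+1}),(S_i,R_{i+2})\}$). In either case $g$ differs from $f$ only at positions in $\{i,i+1\}$, and the inequality $i+1\le j\le \ell$ forces these positions to lie inside $[1,\ell]$. Hence $g\ne f$, and ${\rm pre}_\ell(g)$ is ${\rm pre}_\ell(f)$ modified by the same single edit operation, so $\tdist({\rm pre}_\ell(g),{\rm pre}_\ell(f))=1$. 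Since $|\tilde\alpha_r|=r+n$, the candidate starting positions are $k\in\{1,\ldots,r+1\}$.

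The extreme values of $k$ are immediate. If $k=r+1$, the occurrence is $g$, forcing $g=f$, impossible. If $k=1$, then $\tilde\alpha_r[1..n]={\rm pre}_r(f)\cdot{\rm pre}_\ell(g)$ should equal $f={\rm pre}_r(f)\cdot{\rm suf}_\ell(f)$, so ${\rm pre}_\ell(g)={\rm suf}_\ell(f)$; the triangle inequality yields $\tdist({\rm pre}_\ell(g),{\rm suf}_\ell(f))\ge \tdist({\rm pre}_\ell(f),{\rm suf}_\ell(f))-\tdist({\rm pre}_\ell(g),{\rm pre}_\ell(f))\ge 2-1=1$, a contradiction.

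The substantive case is $2\le k\le r$. Here $\tilde\alpha_r[k..k+n-1]=f[k..r]\cdot g[1..k+\ell-1]$, and equating to $f$ yields simultaneously $f[k..r]={\rm pre}_{r-k+1}(f)$ (a classical, zero-error overlap of $f$ of shift $k-1$) and $g[1..k+\ell-1]={\rm suf}_{k+\ell-1}(f)$. Since $g$ agrees with $f$ outside $\{i,i+1\}\subseteq[1,\ell]\subseteq[1,k+\ell-1]$, the latter implies $\tdist({\rm pre}_{k+\ell-1}(f),{\rm suf}_{k+\ell-1}(f))\le 1$. Using the classical period $k-1$ supplied by the first condition, I would transport this near-equality down to a tilde-transformation of ${\rm pre}_\ell(f)$ into ${\rm suf}_\ell(f)$ using at most one edit operation, contradicting $\tdist({\rm pre}_\ell(f),{\rm suf}_\ell(f))=2$.

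The main obstacle is the transport step in the intermediate case: one must chain applications of the period $k-1$ with the single operation witnessing the length-$(k+\ell-1)$ near-overlap to produce a legitimate replacement or swap between ${\rm pre}_\ell(f)$ and ${\rm suf}_\ell(f)$. A case analysis based on the type of the recovered operation is required (and in the swap case one must verify that the two affected indices land inside $[1,\ell-1]$ and still carry opposite bits after restriction), together with a separate check for the special branch $g=S_i(f)$, where the operation used to build $g$ is not the original $O_i$ of the tilde-transformation but the constructed swap $S_i$. Once this index bookkeeping is set up, the promised tilde-transformation of length at most one emerges and closes the contradiction.
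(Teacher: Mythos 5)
Your setup is sound and your treatment of the boundary cases $k=1$ and $k=r+1$ is correct, but the heart of the proof --- the case $2\le k\le r$ --- is missing, and the plan you sketch for it does not go through. The difficulty is not index bookkeeping: the relation $f[k..r]={\rm pre}_{r-k+1}(f)$ extracted from the prefix part of the occurrence is only a period of length $k-1$ on the \emph{first $r$ characters} of $f$, so it lets you pass from $f[m]$ to $f[m+k-1]$ only when $m+k-1\le r$, i.e.\ $m\le r-k+1$. Your target, however, concerns all $m\in[1,\ell]$, and whenever $\ell>r-k+1$ (which happens for most shifts, e.g.\ always when $\ell\ge r$) the positions $m\in(r-k+1,\ell]$ --- which may well include the error positions $i$ and $j$ of the shift-$r$ overlap --- are left uncontrolled. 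Moreover, the intermediate goal you aim for, namely ${\rm dist}_\sim({\rm pre}_\ell(f),{\rm suf}_\ell(f))\le 1$, is a global statement strictly stronger than what is needed; combining the period $k-1$ on $[1,r]$ with a near-period $r-k+1$ carrying exceptional positions would require a Fine--Wilf-type analysis with errors, not a routine chaining.

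The paper closes this case differently and more locally: assuming $f$ occurs at position $r_1+1$ (your $k=r_1+1$), it derives the single chain $f[i]=f[r_1+i]=f[r+i]$, which contradicts the fact that $i$ is an error position of the shift-$r$ overlap (so ${\rm pre}_\ell(f)[i]\neq{\rm suf}_\ell(f)[i]$). The step $f[i]=f[r_1+i]$ in the regime $r_1+i>r$ uses a third relation that your decomposition omits entirely: $O_i(f)[m]={\rm suf}_\ell(f)[m]=f[r+m]$ for every $m\le\ell$ not modified by $O_j$, i.e.\ the first $n$ symbols of $\tilde\alpha_r$ agree with $f$ except at the $O_j$-positions. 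Without this relation you cannot reach position $i$ when $i>r-r_1$, which is exactly where your transport breaks down. You would also need to treat $k=2$ (shift $1$) separately in the swap branch, as the paper does, since there the translate $r_1+i$ collides with the positions moved by $S_i$. (For the pure replacement case $(O_i,O_j)=(R_i,R_j)$ the paper simply invokes Claim 1 of Lemma 2.2 of \cite{Wei17}.)
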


\begin{proof}
Let $f$ have a $2$-tilde-error overlap of shift $r$ and length $\ell=n-r$ and let $(O_i,O_j)$, $1\leq i<j \leq\ell$, be a tilde-transformation of ${\rm pre}_{\ell}(f)$ in ${\rm suf}_{\ell}(f)$, with $(O_i,O_j) \neq (S_i, S_{i+1})$. If $(O_i,O_j)=(R_i, R_j)$ then $\tilde\alpha_r$ is $f$-free by Claim 1 of Lemma 2.2 in \cite{Wei17}, also in the case of adjacent errors. If $(O_i,O_j)=(S_i, R_{i+2})$ then, by Equation (\ref{eq:alfa_beta_tilde}), we have $\tilde\alpha_r=w_1S_i(f)$ then $\tilde\alpha_r[r+k]=f[k]$, for any $1 \leq k \leq n$, with $k \neq i$ and $k \neq i+1$. If $f$ occurs in $\tilde\alpha_r$ in position $r_1+1$ we have that $1 < r_1 < r$ (if $r_1=1$ then $f[i]=f[i+1]$ and this is not possible since $O_i=S_i$) and $\tilde\alpha_r[r_1+1 \ldots r_1+n]=f[1 \ldots n]$. Finally, we have that $\tilde\alpha_r[k]=f[k]$, for $k \neq r+j$. In conclusion, we have that $f[i]=\tilde\alpha_r[r_1+i]=f[r_1+i]$ (trivially, $r_1+i \neq r+j$). Furthermore $f[r_1+i]=\tilde\alpha_r[r+r_1+i]$ ($r_1+i \neq i$ and $r_1+i \neq i+1$ because $r_1 >1$). But $\tilde\alpha_r[r+r_1+i]=f[r+i] $ then we have the contradiction that $f[i]=f[r+i]$.

In all the other cases the proof is similar. For clarity, note that, also in the case of adjacent errors, supposing that $f$ occurs in $\tilde\alpha_r$ leads to a contradiction in $f[i]$ that is not influenced by $j$.\end{proof}

Note that while $\tilde\alpha_r$ is always $f$-free, $\tilde\beta_r$ is not. Indeed, the property of $\tilde\beta_r$ being not $f$-free is related to a condition on the overlap of $f$. We  
 give the following definition. 

\begin{definition}\label{d-alfabeta}
Let $f\in\Sigma^*$ have a $2$-tilde-error overlap of shift $r$, length $\ell=n-r$.
    Then, the $2$-tilde-error overlap satisfies $Condition^{\sim}$ 
if  any tilde-transformation $(O_i,O_j)$, $1\leq i<j \leq\ell$, of ${\rm pre}_{\ell}(f)$ in ${\rm suf}_{\ell}(f)$,
    is $(O_i,O_j) = (R_i, R_{j})$ or
    $(O_i,O_j) = (S_i, S_{j})$ and    
    \begin{center}
	\hspace{1 cm}
	$\left\{
	\begin{array}{ll}
 
			r \ \ is\ even &\\
			j-i = r/2 & \\
				f[i .. (i+r/2-1)] =f[j .. (j+r/2 -1)] & \\
		\end{array}
		\right. $
    \hspace{1 cm}	(\textbf{\textit{ Condition$^{\sim}$} })
    \end{center}
\end{definition}

    
\begin{lemma}\label{l-beta-tilde-f-free}
	Let $f\in \Sigma^n$ have a $2$-tilde-error overlap of shift $r$, then $\tilde\beta_r(f)$ is not $f$-free iff the  $2$-tilde-error overlap
	satisfies $Condition^{\sim}$.
	\end{lemma}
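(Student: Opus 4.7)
The plan is to turn occurrences of $f$ in $\tilde\beta_r$ into periodicity constraints on $f$ and match them against Condition$^\sim$. The key starting observation is that, in the principal case $(O_i,O_j)=(R_i,R_j)$, the word $\tilde\beta_r = \mathrm{pre}_r(f)\cdot R_j(f)$ agrees on $[1,n]$ with the ``$r$-periodic extension'' of $f$ at every index except $r+i$, since $R_j$ repairs the mismatch at $r+j$ via $\overline{f[j]}=f[j+r]$ while the error at $r+i$ remains untouched. Hence any occurrence of $f$ in $\tilde\beta_r$ must start at a shifted position $p+1$ with $1\le p\le r-1$ and amounts to a system of periodicity equations on $f$.

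For the forward direction $(\Rightarrow)$, suppose $\tilde\beta_r$ is not $f$-free and take the transformation $(R_i,R_j)$ first. Setting $q:=r-p$ and aligning $\tilde\beta_r[p+1..p+n]$ with $f$ position by position, one reads off period $p$ on $[1,r]$ and period $q$ on $[1,n-q]\setminus\{j\}$. I then argue $p=q=r/2$: assuming $p\ne q$, Fine and Wilf applied on the common range $[1,r]$ produces period $\gcd(p,q)$, and chaining this refined periodicity with the overlap identities $f[m]=f[m+r]$ on $[1,\ell]\setminus\{i,j\}$ eventually forces $f[i]=f[i+r]$, contradicting $f[i]\ne f[i+r]$. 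Hence $r$ is even, $j-i=r/2$, and the period-$p$ equations restricted to $[i,j-1]$ yield $f[i..i+p-1]=f[j..j+p-1]$---the last clause of Condition$^\sim$. The case $(S_i,S_j)$ is handled by the same method, with $\{r+i,r+i+1\}$ replacing the single flipped position $r+i$. For the transformation shapes associated with $\mathcal{B}_1$- and $\mathcal{B}_2$-blocks, and for the special pairs $(R_i,S_{i+1})/(S_i,R_{i+2})$ that define $\tilde\beta_r$ via $R_{i+2}$, a parallel periodicity analysis rules out any shifted occurrence of $f$, which secures the uniqueness clause of Condition$^\sim$.

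For the reverse direction $(\Leftarrow)$, assume Condition$^\sim$, set $p=r/2$, and verify $\tilde\beta_r[p+1..p+n]=f$ directly. Condition$^\sim$ (c) supplies period $p$ on $[i,j-1]$; combined with the overlap identity $f[m]=f[m+2p]$ on $[1,\ell]\setminus\{i,j\}$, period $p$ extends to all of $[1,n-p]\setminus\{p+i\}$ by a chaining argument: for an index $m$ outside $[i,j-1]$, iterating the length-$2p$ identity lands the index inside $[i,i+p-1]$, where one may step back by $p$ to conclude $f[m]=f[m+p]$. The iterates $m,m+p,m+2p,\dots$ stay strictly on one side of $\{i,j\}$ before entering the Condition$^\sim$ window, so the forbidden indices are never hit. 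The single exceptional index $p+i$ is exactly where $\tilde\beta_r[r+j]=\overline{f[j]}$ sits, and this absorbs correctly: using Condition$^\sim$, $\overline{f[j]}=\overline{f[i]}$, and using the overlap, $f[2p+i]=f[i+r]=\overline{f[i]}$, so the two agree.

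The main obstacle is the forward-direction chaining-to-contradiction step that forces $p=r/2$: one must combine the derived periodicities on $[1,r]$ and $[1,n-q]\setminus\{j\}$ with the overlap identities so that $f[i]=f[i+r]$ is reached regardless of the divisibility relation between $p$ and $q$, and simultaneously track which indices carry the residual error in each admissible transformation shape. The reverse direction is more mechanical, but the chaining argument still needs the iterates $m+kp$ to avoid $\{i,j\}$ before entering the Condition$^\sim$ window; this is automatic because they always lie strictly on one side of $[i,j]$.
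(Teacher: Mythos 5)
Your reverse direction is sound and essentially matches the paper's: the paper makes the same $r/2$-periodic structure explicit as a factorization $f=\rho(uw)^{k_1}uwuw\overline{u}w\overline{u}(w\overline{u})^{k_2}\sigma$ and exhibits the shifted occurrence directly (citing the Hamming-case claim for $(R_i,R_j)$). The forward direction, however, has a genuine gap at its central step. What must be proved is that an internal occurrence of $f$ at position $p+1$ in $\tilde\beta_r$ forces \emph{both} $p=j-i$ \emph{and} $p=r-(j-i)$; only together do these give $r$ even, $j-i=r/2$, and (via the period-$p$ relations on $[i,j-1]$) the third clause of Condition$^{\sim}$. Your argument only compares $p$ with $q=r-p$ and never ties either of them to $j-i$: you conclude ``hence $j-i=r/2$'' without justification, and without $p=j-i$ your derivation of $f[i..i+p-1]=f[j..j+p-1]$ from period $p$ does not produce the required clause.

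Moreover, the Fine--Wilf step does not deliver the contradiction you claim. The refined period $\gcd(p,q)$ lives only on the prefix $f[1..r]$ (and even there the period $q$ has a possible exception at $j$ when $j\le p$), while the target identity $f[i]=f[i+r]$ involves the position $i+r>r$; to reach it you must use the periods $p$ and $q$ beyond $[1,r]$, where they fail exactly at the indices $q+i$ and $j$ created by the repaired and unrepaired errors. Those exceptions are the whole point: the two-step chain $f[i]=f[i+p]=f[i+p+q]=f[i+r]$ gives a contradiction unless it passes through the exception, i.e.\ unless $i+p=j$, and the symmetric chain starting from $f[j]$ gives one unless $j=i+q$. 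This is precisely the paper's position-chasing argument (the one already used for $\tilde\alpha_r$ in Lemma~\ref{l-alfa-tilde-f-free}); it yields $p=j-i=q$ directly, with no appeal to Fine--Wilf and no case split on $p\ne q$. As written, your proof would also fail to exclude an occurrence with $p=q=r/2$ but $j-i\ne r/2$, which the chain argument shows cannot exist.
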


\begin{proof} 
Let us suppose $f$ have a $2$-tilde-error overlap of shift $r$ and length $\ell=n-r$ and let $(O_i,O_j)$, $1\leq i<j \leq\ell$, be a tilde-transformation of ${\rm pre}_{\ell}(f)$ in ${\rm suf}_{\ell}(f)$
that satisfies $Condition^{\sim}$.

Now, if $(O_i, O_j)=(R_i, R_j)$
  then the fact that $\tilde\beta_r(f)$ is not $f$-free can be shown as in the proof of Claim 2 of Lemma 2.2 in \cite{Wei17}. 
  
  If 
$(O_i, O_j)=(S_i, S_j)$, then that proof must be suitably modified as follows. 
Let $f[i]=f[j]=x$, $f[i+r]=f[j+r]=\overline{x}$, $f[i+1]=f[j+1]=\overline{x}$ and $f[i+1+r]=f[j+1+r]=x$. 
Then, it is possible to show that, for some $k_1, k_2 \geq 0$, 
$f= \rho (uw)^{k_1}uwuw\overline{u}w\overline{u}(w\overline{u})^{k_2}\sigma$, where $u=x\overline{x}$, $w=f[i+2 .. j-1]$ ($w$ is empty, if $j=i+2$) and
$\rho$ and $\sigma$ are, respectively, a suffix and a prefix of $w$.
Then, $\tilde\beta_r(f)=\rho (uw)^{k_1+1}uwuw\overline{u}w\overline{u}(w\overline{u})^{k_2+1}\sigma$ and, hence, $\tilde\beta_r(f)$ is not $f$-free. 
 
 Assume now that $\tilde\beta_r(f)$ is not $f$-free and suppose that a copy of $f$, say $f'$, occurs in $\tilde\beta_r(f)$ at position $r_1+1$. A reasoning similar to the one used in the proof of Lemma  \ref{l-alfa-tilde-f-free}, shows that 
 $j-i = r_1$ and $j-i = r-r_1$. Hence $r=2 r_1$ is even and $j-i=r/2$. Therefore, $f[i+t]=f'[i+t]= f[i+t+r/2]= f[j+t]$, for  $0 \leq t \leq r/2$, i.e. $f[i .. (i+r/2-1)] =f[j .. (j+r/2 -1)]$.
Moreover, such conditions with the assumption that $\tilde\beta_r(f)$ is not $f$-free can be used to show that
$(O_i, O_j)\neq(S_i, R_j)$ and $(O_i, O_j)\neq(R_i, S_j)$
and thus
the $2$-tilde-error overlap satisfies $Condition^{\sim}$.\end{proof}

  \subsection{The sufficient condition}\label{s:dritto}
In this last subsection we conclude the proof of Theorem \ref{t:characterization} by proving the   {\em if}  direction.

 We distinguish the cases of $1$-tilde-error overlap (case $(C0)$), $2$-tilde-error overlap with non-adjacent errors (case $(C1)$)  and,  finally, $2$-tilde-error overlap with adjacent errors (cases $(C2)-(C5)$). Non-adjacent errors can be dealt with the standard techniques used for the Hamming distance, while the case of adjacent ones may show new issues. Note  that, for some error types, we need also to distinguish sub-cases related to the different characters adjacent to the overlap.

For each case in the list, the proof consists in providing a pair of words which is proved to be a pair of tilde-witnesses for $f$ by showing that it fulfills conditions 1., 2. and 3. in Definition \ref{d-witnesses}.

Let us start with the case (C0) of a $1$-tilde-error overlap.

\begin{proposition}(C0)\label{p-1-sufficiente}
    If $f$ has a $1$-tilde-error overlap ${\$x \choose ax}{01 \choose 10}{yb \choose y\$}$, with $x,y \in \Sigma^*$, $a,b \in \Sigma$, then $f$ is tilde-non-isometric.
\end{proposition}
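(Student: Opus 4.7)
My plan is to exhibit an explicit pair of tilde-witnesses, mimicking the classical Hamming construction of Wei \cite{Wei17}: a $1$-tilde-error overlap $01\choose 10$ is, from the Hamming standpoint, a $2$-Hamming-error overlap with two adjacent error positions, so that construction can be reused verbatim. Setting $i = |x|+1$, so that $f[i]f[i+1]=01$ and $f[r+i]f[r+i+1]=10$, I would define
\[
\alpha = {\rm pre}_r(f)\,R_i(f), \qquad \beta = {\rm pre}_r(f)\,R_{i+1}(f).
\]

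First I would verify that $\alpha$ and $\beta$ are $f$-free. For $\alpha$ this is a direct transcription of Claim~1 of Lemma~2.2 in \cite{Wei17} (the Hamming ancestor invoked in Lemma~\ref{l-alfa-tilde-f-free}), applied to the $2$-Hamming-error overlap hidden in our $1$-tilde-error overlap. For $\beta$, the analogous argument can fail only under a periodicity condition forcing $f[i]=f[i+1]$; this is precluded by the very shape $01\choose 10$ of the block, so $\beta$ is $f$-free as well.

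Next, $\alpha$ and $\beta$ agree everywhere except at the two adjacent positions $r+i$ and $r+i+1$, where $\alpha$ reads $11$ and $\beta$ reads $00$. Because swaps preserve the number of $1$'s while replacements change it by $\pm 1$, any tilde-transformation from $\alpha$ to $\beta$ of length $2$ must consist of two $1\to 0$ replacements at exactly those two positions; so $\tdist(\alpha,\beta)=2$ and the only minimal transformations are $(R_{r+i},R_{r+i+1})$ and $(R_{r+i+1},R_{r+i})$. A short direct computation then gives $R_{r+i}(\alpha) = {\rm pre}_r(f)\cdot f$, whose factor at position $r+1$ is $f$, while $R_{r+i+1}(\alpha)$ turns the factor $x01y$ occupying positions $r+1,\ldots,r+\ell$ into $x10y={\rm suf}_\ell(f)$, producing a word whose prefix of length $n$ is ${\rm pre}_r(f)\cdot{\rm suf}_\ell(f)=f$. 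Both intermediate words therefore contain $f$, so $(\alpha,\beta)$ is a pair of tilde-witnesses, which establishes that $f$ is tilde-non-isometric.

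The main subtlety I anticipate is the $f$-freeness of $\beta$: although the periodicity obstruction cannot occur in the $01\choose 10$ case, one still has to inspect every potential occurrence of $f$ straddling the junction in $\beta$, in the same spirit as the analysis of $Condition^{\sim}$ carried out in Lemma~\ref{l-beta-tilde-f-free}; the computation for $\alpha$ and the two transformations is essentially mechanical once this is in place.
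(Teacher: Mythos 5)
Your proposal is correct and follows essentially the same route as the paper: the paper uses exactly the witness pair $u={\rm pre}_r(f)R_i(f)$, $v={\rm pre}_r(f)R_{i+1}(f)$, observes that the two words differ by the adjacent block ${11\choose 00}$ so that only the two orderings of $(R_{r+i},R_{r+i+1})$ are available, and locates the forced occurrences of $f$ at positions $r+1$ and $1$ just as you do. Your extra remarks on $f$-freeness of $\beta$ (the periodicity obstruction being blocked by $f[i]\neq f[i+1]$) match the level of detail the paper delegates to the techniques of its Lemma on $\tilde\alpha_r$.
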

\begin{proof}
Let $f$ have a $1$-tilde-error overlap ${\$x \choose ax}{01 \choose 10}{yb \choose y\$}$ with shift $r$ and block ${01 \choose 10}$ occurring at position  $i=|x|+1$.
The pair $(u,v)$ with:
    $$u={\rm pre}_r(f)R_i(f) \hspace{1cm}v={\rm pre}_r(f)R_{i+1}(f)$$
is a pair of tilde-witnesses for $f$. 
Indeed, first of all, note that $(u,v)=(w11z, w00z)$ for some $w,z \in \Sigma^*$. Therefore,  $dist_\sim(u, v)=2$. Moreover, using techniques similar to the ones used in Lemma \ref{l-alfa-tilde-f-free}, one can prove that $u$ and $v$ are $f$-free.
Finally, no $f$-free tilde-transformation from $u$ to $v$ exists. The only possible tilde-transformations from $u$ to $v$ are $(R_{r+i}, R_{r+i+1})$ and $(R_{r+i+1},R_{r+i})$ and they both are not $f$-free since
$f$ occurs at position $r+1$ in
$R_{r+i}(u)$
and at position $1$ in $R_{r+i+1}(u)$.\end{proof}

\begin{example}
    The word $f=101$ has a $1$-tilde-error overlap ${\$ \choose 1}{10\choose 01}{1\choose \$}$  with block  ${10 \choose 01}$ in position $1$ therefore it is tilde-non-isometric. In fact, the pair $(u,v)$ with $u=1001$ and $v=1111$ is a pair of tilde-witnesses.
\end{example}

Let us consider the case of non-adjacent errors and prove that $f$ satisfies (C1) then it is tilde-non-isometric.
\begin{proposition}(C1)\label{p-skukkiati-C1}
If $f$ has a $2$-tilde-error overlap 
    with non-adjacent errors, 
different from   
    ${\$x \choose ax}{000\choose 101}{yb \choose y\$}$ with $x, y \in \Sigma^+$, $a,b\in \Sigma$ then $f$ is tilde-non-isometric.
\end{proposition}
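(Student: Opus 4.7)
The plan is to construct a pair of tilde-witnesses for $f$ using the same scheme already employed in the proof of (C0), namely the pair $(\tilde\alpha_r, \tilde\beta_r)$ attached to the given $2$-tilde-error overlap. Let $r$ be the shift of the overlap, $\ell = n-r$ its length, and $(O_i, O_j)$ with $i<j$ a tilde-transformation from ${\rm pre}_\ell(f)$ to ${\rm suf}_\ell(f)$. The non-adjacency hypothesis rules out the pairs $(R_i, S_{i+1})$ and $(S_i, R_{i+2})$, so Equation~(\ref{eq:alfa_beta_tilde}) applies, and I set $\tilde\alpha_r = {\rm pre}_r(f)\, O_i(f)$ and $\tilde\beta_r = {\rm pre}_r(f)\, O_j(f)$. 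The goal is to verify the three conditions of Definition~\ref{d-witnesses} for this pair.

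For condition~1, both words coincide on their first $r$ positions, and applying $O_{r+i}$ followed by $O_{r+j}$ on the second copy of $f$ transforms $\tilde\alpha_r$ into ${\rm pre}_r(f)\, f$ and then into $\tilde\beta_r$, so ${\rm dist}_\sim(\tilde\alpha_r, \tilde\beta_r) \leq 2$; non-adjacency of the modified positions then forces equality. For condition~2, $\tilde\alpha_r$ is $f$-free by Lemma~\ref{l-alfa-tilde-f-free}, while $\tilde\beta_r$ is $f$-free by Lemma~\ref{l-beta-tilde-f-free} provided the overlap does not satisfy $Condition^{\sim}$. If $(O_i, O_j)$ is a mixed pair (one replacement and one swap at non-adjacent positions), $Condition^{\sim}$ fails by definition. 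Otherwise $(O_i, O_j) = (R_i, R_j)$ or $(S_i, S_j)$, and combining the requirements $j - i = r/2$ and $f[i..i+r/2-1] = f[j..j+r/2-1]$ of $Condition^{\sim}$ with the error pattern of the overlap, I will argue via a case analysis on the characters forced at positions $i, i+1, \ldots, j+1$ in both rows that the only configuration still compatible is exactly the excluded form ${\$x \choose ax}{000\choose 101}{yb \choose y\$}$ with $x, y \in \Sigma^+$, contradicting the hypothesis.

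For condition~3, Remark~\ref{r-unici-blocchi-ambigui-dist2} enumerates the possible tilde-transformations from $\tilde\alpha_r$ to $\tilde\beta_r$: the ordering that first applies the operation at position $r+i$ passes through ${\rm pre}_r(f)\, f$, which visibly contains $f$ starting at position $r+1$; each remaining ordering is handled by using the overlap identities at positions $r+i$ and $r+j$ to locate an occurrence of $f$ in the corresponding intermediate word, following the template of Claim~1 of Lemma~2.2 in~\cite{Wei17} but adapted to cover the swap case as in the proof of Lemma~\ref{l-alfa-tilde-f-free}.

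The main obstacle I expect is the $Condition^{\sim}$ analysis inside condition~2: the periodicity constraint $j - i = r/2$ together with the shifted character equalities $f[i..i+r/2-1] = f[j..j+r/2-1]$ must be unpacked carefully for each possible shape of $(O_i, O_j)$, tracking the characters forced at the error positions and at their shifted copies on both rows of the overlap, so as to pin down exactly the excluded $000/101$ pattern with non-empty surrounding context as the unique obstruction. Once this is settled, conditions~1 and~3 are routine adaptations of the Hamming-distance arguments and of the ideas already used for (C0).
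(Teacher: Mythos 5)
There is a genuine gap, and it sits exactly where you flagged the ``main obstacle'': your claim that $Condition^{\sim}$, combined with non-adjacency and the error pattern, forces the overlap to be the excluded block ${\$x \choose ax}{000\choose 101}{yb \choose y\$}$ is false, so the intended contradiction cannot be derived. A concrete counterexample is $f=110110010010$: it has a $2$-tilde-error overlap of shift $r=6$ with ${\rm pre}_6(f)=110110$ and ${\rm suf}_6(f)=010010$, realized by $(R_1,R_4)$, hence non-adjacent errors (positions $2,3$ interleave) and a block ${1101\choose 0100}$ that is not of the excluded form; yet $r$ is even, $j-i=3=r/2$ and $f[1..3]=f[4..6]=110$, so $Condition^{\sim}$ holds. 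For this word $\tilde\beta_6={\rm pre}_6(f)R_4(f)=110110110010010010$ contains $f$ at position $4$, so your pair $(\tilde\alpha_r,\tilde\beta_r)$ violates condition~2 of Definition~\ref{d-witnesses} and is not a pair of tilde-witnesses. This is consistent with Lemma~\ref{l-beta-tilde-f-free}, which characterizes exactly when $\tilde\beta_r$ fails to be $f$-free; the excluded $000/101$ pattern in the statement of (C1) is there for a different reason (it is the case where the only length-$2$ tilde-transformation of the overlap is ambiguous in a way that lets an $f$-free transformation of $\tilde\alpha_r$ into $\tilde\beta_r$ sneak through), not because it exhausts the $Condition^{\sim}$ case.

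The paper's proof therefore splits into two subcases. When $Condition^{\sim}$ fails, it uses $(\tilde\alpha_r,\tilde\beta_r)$ essentially as you do, additionally invoking Remark~\ref{r-unici-blocchi-ambigui-dist2} to rule out extra tilde-transformations: the block ${100\choose 001}$ cannot occur in ${\tilde\alpha_r \choose \tilde\beta_r}$ precisely because the overlap is assumed different from ${\$x \choose ax}{000\choose 101}{yb \choose y\$}$, and ${101\choose 010}$ cannot occur because the errors are non-adjacent. When $Condition^{\sim}$ holds, it abandons $(\tilde\alpha_r,\tilde\beta_r)$ altogether and builds a different pair $(\tilde\eta_r,\tilde\gamma_r)$ with $\tilde\eta_r={\rm pre}_r(f)\,O_i(f)\,{\rm suf}_{r/2}(f)$ and $\tilde\gamma_r={\rm pre}_r(f)\,O_j(O_t(f))\,{\rm suf}_{r/2}(f)$, $t=j+r/2$, which has tilde-distance $3$ and is shown to be a pair of tilde-witnesses. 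To repair your argument you would need to import this second construction (or an equivalent one); the case analysis you propose cannot close the gap because the configuration you hope to exclude genuinely occurs.
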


\begin{proof}
Let $f$ have a  $2$-tilde-error overlap  with shift $r$ and length $\ell=n-r$ 
with non-adjacent errors
different from
  ${\$x \choose ax}{000\choose 101}{yb \choose y\$}$.
Let $(O_i,O_j)$, $1\leq i<j \leq\ell$ be a tilde-transformation from ${\rm pre}_{\ell}(f)$ to ${\rm suf}_{\ell}(f)$. 
In case of the $2$-tilde-error overlap ${\$x \choose ax}{100 \choose 001}{yb \choose y\$}$, consider $(O_i,O_j)=(R_i, R_{i+2})$.
Then,
if the $2$-tilde-error overlap 
does not
	satisfy $Condition^{\sim}$ then the pair $(\tilde\alpha_r, \tilde\beta_r)$ as in Equation (\ref{eq:alfa_beta_tilde}) is a pair of tilde-witnesses for $f$.
  Indeed, $dist_\sim(\tilde\alpha_r, \tilde\beta_r)=2$ and 
from Lemma \ref{l-alfa-tilde-f-free} and Lemma \ref{l-beta-tilde-f-free}, $\tilde\alpha_r$ and $\tilde\beta_r$ are $f$-free. 
Finally, no $f$-free tilde-transformation from $\tilde\alpha_r$ to $\tilde\beta_r$ exists. In fact,
 the tilde-transformations from $\tilde\alpha_r$ to $\tilde\beta_r$, $(O_{r+i},O_{r+j})$ and $(O_{r+j},O_{r+i})$, are not $f$-free because
$f$ occurs at position $r+1$ in
$O_{r+i}(\tilde\alpha_r)$
and at position $1$ in $O_{r+j}(\tilde\alpha_r)$. And no other 
tilde-transformation is possible since ${\tilde\alpha_r \choose \tilde\beta_r}$ cannot contain either $100\choose 001$ or $101 \choose 010$ (see Remark \ref{r-unici-blocchi-ambigui-dist2}). In fact  the $2$-tilde-error overlap of $f$ yealding $\tilde\alpha_r$ and $\tilde\beta_r$ would be  ${\$x \choose ax}{000\choose 101}{yb \choose y\$}$ in the first case, against the hypotesis,  and ${\$x \choose ax}{101\choose 010}{yb \choose y\$}$, in the second case, that is not allowed since the errors would be adjacent. 
\\
If, instead, the  $2$-tilde-error overlap
	satisfies $Condition^{\sim}$, then consider the pair $(\tilde\eta_r, \tilde\gamma_r)$ with $\tilde\eta_r={\rm pre}_r(f) O_i(f){\rm suf}_{r/2}(f)$, $\tilde\gamma_r={\rm pre}_r(f) O_j(O_t(f)){\rm suf}_{r/2}(f)$,where $t=j+r/2$ and $O_t=R_t$, if  $O_i=R_i$, and $O_t=S_t$, if  $O_i=S_i$.
 Note that $O_j(O_t(f))=O_t(O_j(f))$, since the $2$-tilde-error overlap has non-adjacent errors. The pair $(\tilde\eta_r, \tilde\gamma_r)$ is a pair of tilde-witnesses for $f$. Indeed, $dist_\sim(\tilde\gamma_r, \tilde\eta_r)=3$ (recall that the $2$-tilde-error overlap has non-adjacent errors). Moreover, using techniques similar to the ones used in Lemma \ref{l-alfa-tilde-f-free}, one can prove that $\tilde\gamma_r$ and $\tilde\eta_r$ are $f$-free. 
Finally, no $f$-free tilde-transformation from $\tilde\gamma_r$ to $\tilde\eta_r$ exists. In fact,
 any possible tilde-transformation from $\tilde\gamma_r$ to $\tilde\eta_r$, is given by a permutation of $\{O_{r+i},O_{r+j},O_{r+t}\}$ and any of such tilde-transformations let $f$ occur at position $1$, $r$ or $3r/2$.\end{proof}

Let us examine all the remaining cases (C2) up to (C5) that correspond to the different kinds of $2$-tilde-error overlaps of $f$ with adjacent errors and state the whole sufficient condition for a word to be non-tilde-isometric.

\begin{proposition}$[C2-C5]$\label{p-2-sufficiente}
Let $f \in \Sigma^n$. If  one of the following cases occurs for $f$ (up to complement, reverse, and exchange of rows) then $f$ is tilde-non-isometric.
\begin{itemize}
\item[$(C2)$]\label{case2Prop} $f$ has a $2$-tilde-error overlap ${\$x\choose ax}{0101\choose 1010}{yb\choose y\$}$ 
    or ${\$x\choose ax}{0110\choose 1001}{yb\choose y\$}$
    with $x,y\in \Sigma^*$, $a,b\in \Sigma$
    
    \item[$(C3)$]\label{case3Prop} $f$ has a $2$-tilde-error overlap 
    ${\$x\choose ax} {{010\choose 101}} {yb\choose y\$}$ with $x,y\in \Sigma^*$, $a,b\in \Sigma$
     
    \item[$(C4)$]\label{case4Prop} $f$ has a $2$-tilde-error overlap 
    ${\$x\choose ax}{011\choose 100}{0\choose \$}$
    with $x\in \Sigma^*$, $a\in \Sigma$
    
    \item[$(C5)$]\label{case5Prop} $f$ has a $2$-tilde-error overlap ${\$ \choose 0}{00\choose 11}{1\choose \$}$
\end{itemize}
\end{proposition}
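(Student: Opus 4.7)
The plan is to treat cases (C2)--(C5) one by one, following the template already used for the proofs of (C0) and (C1): in each case construct an explicit pair $(u,v)=(\tilde\alpha_r,\tilde\beta_r)$ of tilde-witnesses for $f$ from the $2$-tilde-error overlap, using Equations (\ref{eq:alfa_beta_tilde}) or (\ref{eq:alfa_beta_101}), and verify the three conditions of Definition \ref{d-witnesses}. Concretely, the operations $(O_i,O_j)$ read off the middle block are: $(S_i,S_{i+2})$ for (C2) (blocks $0101\choose 1010$ or $0110\choose 1001$, which only admit this pair of two swaps at distance $2$); $(S_i,R_{i+2})$ for (C3) via Equation (\ref{eq:alfa_beta_101}) (block $010\choose 101$); $(S_i,R_{i+2})$ for (C4) (block $011\choose 100$ at the right $\$$-boundary); and $(R_i,R_{i+1})$ for (C5) (block $00\choose 11$ between $\$$-boundaries). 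Symmetric variants (complement, reverse, exchange of rows) are handled by Lemma \ref{l-rev-comp}.

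For each pair so constructed, condition~(1) of Definition \ref{d-witnesses} is immediate because the two words differ on a short window and an explicit length-$2$ tilde-transformation $(O_{r+i},O_{r+j})$ realizes the distance. For condition~(2), $\tilde\alpha_r$ is $f$-free by Lemma \ref{l-alfa-tilde-f-free}, and $\tilde\beta_r$ is $f$-free by Lemma \ref{l-beta-tilde-f-free} once we check that $Condition^{\sim}$ is not met; this check is short in (C3)--(C5), since the mixed operation types $(S_i,R_{i+2})$ in (C3) and (C4) directly contradict $Condition^{\sim}$'s requirement of matching types $(R_i,R_j)$ or $(S_i,S_j)$, and in (C5) the shift constraint $j-i=1$ forces $r=2$ but the $\$$-boundaries pinning $00 \choose 11$ leave no room for the required internal equality $f[i]=f[j]$. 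For condition~(3), I would enumerate the tilde-transformations from $\tilde\alpha_r$ to $\tilde\beta_r$ using Remark \ref{r-unici-blocchi-ambigui-dist2} to bound their number (at most four when the induced block belongs to $\mathcal{B}_1$, three when it belongs to $\mathcal{B}_2$, two otherwise), invoking Lemma \ref{l-blocchi-esclusi} to rule out unexpected $f$-free intermediates coming from $R/S$ interchanges, and show that the first operation of any such transformation forces an occurrence of $f$ either at position~$1$, because it reverses $O_i$ on the appended copy of $f$ so that the overlap re-creates $f$ at the very start, or at position~$r+1$, because it anticipates $O_j$ and completes the appended copy to a genuine occurrence of $f$.

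The principal obstacle is the degenerate subcase of (C2) with block $0101 \choose 1010$ and shift $r=4$: there $(O_i,O_j)=(S_i,S_{i+2})$ has $j-i=2=r/2$ and the local equality $f[i..i+1]=f[i+2..i+3]=01$ is baked into the block itself, so $Condition^{\sim}$ holds and Lemma \ref{l-beta-tilde-f-free} tells us $\tilde\beta_r$ is not $f$-free; the straightforward construction breaks down. To handle this subcase I would adopt the same augmentation trick used in the proof of (C1) when $Condition^{\sim}$ holds: replace the pair by $\tilde\eta_r={\rm pre}_r(f)\,S_i(f)\,{\rm suf}_{r/2}(f)$ and $\tilde\gamma_r={\rm pre}_r(f)\,S_j(S_t(f))\,{\rm suf}_{r/2}(f)$ with $t=j+r/2$, obtaining a pair at tilde-distance~$3$. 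Adjacency of $S_i$ and $S_j$ is no longer an issue, because $S_t$ acts on positions disjoint from those of both $S_i$ and $S_j$ and therefore commutes with each; verifying that $(\tilde\eta_r,\tilde\gamma_r)$ is a valid pair of tilde-witnesses then reduces to a case analysis over the six permutations of $\{S_{r+i},S_{r+j},S_{r+t}\}$, in each of which an $f$-factor is forced at one of the positions $1$, $r+1$, or $3r/2+1$.
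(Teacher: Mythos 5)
Your treatment of (C2) and (C3) matches the paper's: the same operation pairs, the same witnesses from Equations (\ref{eq:alfa_beta_tilde}) and (\ref{eq:alfa_beta_101}), and the same augmented pair $(\tilde\eta_r,\tilde\gamma_r)$ at distance $3$ for the one subcase of (C2) where $Condition^{\sim}$ holds (block ${0101\choose 1010}$ with shift $r=4$). The two remaining cases each contain a genuine gap, and they are precisely the cases where the paper abandons the default construction. In (C4) the pair from Equation (\ref{eq:alfa_beta_101}) is $\tilde\alpha_r={\rm pre}_r(f)S_i(f)$ and $\tilde\beta_r={\rm pre}_r(f)R_{i+2}(f)$, whose alignment contains the block ${101\choose 010}\in\mathcal{B}_1$ at position $r+i$; by Remark \ref{r-unici-blocchi-ambigui-dist2} this admits four tilde-transformations, and the two beginning with $R_{r+i}$ or $S_{r+i+1}$ need not create an occurrence of $f$: they neither undo $S_i$ on the appended copy (which would put $f$ at position $r+1$) nor complete the prefix part to ${\rm suf}_\ell(f)$ (which would put $f$ at position $1$), since they leave $001$ rather than $011$ or $100$ in the critical window. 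Your appeal to Lemma \ref{l-blocchi-esclusi} cannot close this: that lemma constrains a pair of tilde-witnesses of \emph{minimal} distance in the necessity direction; it does not show that a particular constructed pair admits no $f$-free transformation. The paper instead replaces $\tilde\beta_r$ by $\B_r={\rm pre}_r(f)S_{i+2}(f)$ (legitimate because $f[\ell+1]=0$ here), so the induced block becomes ${1010\choose 0101}$ and only the two orderings of $(S_{r+i},S_{r+i+2})$ survive, both visibly not $f$-free.

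The gap in (C5) is more basic: with $(O_i,O_j)=(R_1,R_2)$ the words $\tilde\alpha_r={\rm pre}_r(f)R_1(f)$ and $\tilde\beta_r={\rm pre}_r(f)R_2(f)$ differ exactly in the block ${10\choose 01}$ at positions $r+1,r+2$, so ${\rm dist}_\sim(\tilde\alpha_r,\tilde\beta_r)=1$ --- the two adjacent replacements collapse into a single swap --- and condition 1 of Definition \ref{d-witnesses} fails. Your blanket claim that condition (1) is ``immediate'' because the explicit length-$2$ transformation realizes the distance is false precisely here. The paper's fix is again a different second word, $\tilde\psi_r={\rm pre}_{r-1}(f)\,1\,S_{i+1}(f)$, giving the pair $(w0101z,\,w1010z)$ at distance $2$ whose only transformations are the two orderings of $(S_{r-1},S_{r+1})$. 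Without these two repairs your argument does not establish (C4) or (C5).
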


\begin{proof}

\smallskip
\noindent

  \noindent (C2).

Let $f$ have a $2$-tilde-error overlap ${\$x\choose ax}{0101\choose 1010}{yb\choose y\$}$ 
    or ${\$x\choose ax}{0110\choose 1001}{yb\choose y\$}$
    with $x,y\in \Sigma^*$, $a,b\in \Sigma$. 

We distinguish the proof in two cases.\\
\noindent 
{\em Case a.} Suppose that $f$ has a $2$-tilde-error overlap ${\$x\choose ax}{0101\choose 1010}{yb\choose y\$}$, $x,y\in \Sigma^*$, $a,b\in \Sigma$,  with shift $r$ and length $\ell=n-r$ and let $(O_i,O_j)$, be a tilde-transformation from ${\rm pre}_{\ell}(f)$ to ${\rm suf}_{\ell}(f)$. Note that it must be $(O_i,O_j)=(S_i,S_{i+2})$ or $(O_i,O_j)=(S_{i+2},S_i)$ where $i=|x|+1$.

If the $2$-tilde-error overlap does not satisfy $Condition^{\sim}$ then the pair $(\tilde\alpha_r, \tilde\beta_r)$ as in Equation (\ref{eq:alfa_beta_tilde}) is a pair of tilde-witnesses for $f$.
  Indeed, $dist_\sim(\tilde\alpha_r, \tilde\beta_r)=2$ and 
from Lemma \ref{l-alfa-tilde-f-free} and Lemma \ref{l-beta-tilde-f-free}, $\tilde\alpha_r$ and $\tilde\beta_r$ are $f$-free. 
Finally, no $f$-free tilde-transformation from $\tilde\alpha_r$ to $\tilde\beta_r$ exists. In fact,
 the only possible tilde-transformations from $\tilde\alpha_r$ to $\tilde\beta_r$ are $(S_{r+i},S_{r+i+2})$ and $(S_{r+i+2},S_{r+i})$. But they are not $f$-free because
 they let $f$ occur at position $1$ and at position $r+1$ in 
 $S_{r+i+2}(\tilde\alpha_r)$ 
 and $S_{r+i}(\tilde\alpha_r)$, respectively.
\\  
If instead the  $2$-tilde-error overlap
	satisfies $Condition^{\sim}$, then consider the pair $(\tilde\eta_r, \tilde\gamma_r)$ with $\tilde\eta_r={\rm pre}_r(f) S_i(f){\rm suf}_{r/2}(f)$, $\tilde\gamma_r={\rm pre}_r(f) S_{i+2}(S_{i+4}(f)){\rm suf}_{r/2}(f)$.
Note that $S_{i+2}(S_{i+4}(f)=S_{i+4}(S_{i+2}(f)$.

 The pair $(\tilde\eta_r, \tilde\gamma_r)$ is a pair of tilde-witnesses for $f$. Indeed, it is easy to see that $dist_\sim(\tilde\gamma_r, \tilde\eta_r)=3$. Moreover, using techniques similar to the ones used in Lemma \ref{l-alfa-tilde-f-free}, one can prove that $\tilde\gamma_r$ and $\tilde\eta_r$ are $f$-free. 
Finally, no $f$-free tilde-transformation from $\tilde\gamma_r$ to $\tilde\eta_r$ exists. In fact,
 any possible tilde-transformations  from $\tilde\gamma_r$ to $\tilde\eta_r$, consists of three swap operation and is given by a permutation of $\{S_{r+i},S_{r+i+2},S_{r+i+4}\}$. But any of such tilde-transformations let $f$ occurs at position $1$ or $r+1$ or $3r/2+1$.

\noindent
{\em Case b.}  Suppose that $f$ has a $2$-tilde-error overlap ${\$x\choose ax}{0110\choose 1001}{yb\choose y\$}$, $x,y\in \Sigma^*$, $a,b\in \Sigma$,  with shift $r$ and length $\ell=n-r$ and let $(O_i,O_j)$, be a tilde-transformation from ${\rm pre}_{\ell}(f)$ to ${\rm suf}_{\ell}(f)$. 
Note that it must be $(O_i,O_j)=(S_i,S_{i+2})$ or $(O_i,O_j)=(S_{i+2},S_i)$ where $i=|x|+1$.
Now, the pair $(\tilde\alpha_r, \tilde\beta_r)$ as in Equation (\ref{eq:alfa_beta_tilde}) is a pair of tilde-witnesses for $f$.
Note that, in this case, the $2$-tilde-error overlap does not satisfy Condition$^\sim$ since $j=i+2$ and $f[i]$ is different from $f[j]$.

	\smallskip
\noindent (C3)
Let $f$ have a $2$-tilde-error overlap 
    ${\$x\choose ax}{{010\choose 101}} {yb\choose y\$}$, $x,y\in \Sigma^*$, $a,b\in \Sigma$ with shift $r$ and length $\ell=n-r$ and let $(O_i,O_j)$, be a tilde-transformation from ${\rm pre}_{\ell}(f)$ to ${\rm suf}_{\ell}(f)$. Note that it must be $(O_i,O_j)=(S_i,R_{i+2})$ or $(O_i,O_j)=(R_i,S_{i+1})$. 
Then,
the $2$-tilde-error overlap 
does not
	satisfy $Condition^{\sim}$. The pair $(\tilde\alpha_r, \tilde\beta_r)$ as in Equation (\ref{eq:alfa_beta_101}) is a pair of tilde-witnesses for $f$.
  Indeed, it is easy to see that $dist_\sim(\tilde\alpha_r, \tilde\beta_r)=2$. Moreover, 
from Lemma \ref{l-alfa-tilde-f-free} and Lemma \ref{l-beta-tilde-f-free}, $\tilde\alpha_r$ and $\tilde\beta_r$ are $f$-free. 
Finally, no $f$-free tilde-transformation from $\tilde\alpha_r$ to $\tilde\beta_r$ exists. In fact,
 the tilde-transformations from $\tilde\alpha_r$ to $\tilde\beta_r$, $(O_{r+i},O_{r+j})$ and $(O_{r+j},O_{r+i})$, are not $f$-free because
 they let $f$ occur at position $r+1$ and at position $1$ in $O_{r+i}(\tilde\alpha_r)$ and $O_{r+j}(\tilde\alpha_r)$, respectively, and no other 
tilde-transformation is possible.

\smallskip
\noindent (C4) 
Let $f$ have a $2$-tilde-error overlap
${\$x\choose ax}{011\choose 100}{0\choose \$}$, $x\in \Sigma^*$, $a\in \Sigma$ with shift $r$ and length $\ell=n-r$ and let $(O_i,O_j)$, be a tilde-transformation from ${\rm pre}_{\ell}(f)$ to ${\rm suf}_{\ell}(f)$. 
Note that it must be $(O_i,O_j)=(S_i,R_{i+2})$ with $i=|x|+1$. 
 In this case we need a different technique to construct the pair of tilde-witnesses $(\tilde\alpha_r,\B_r)$ for $f$. 
More exactly, we set $\tilde\alpha_r$ as in Equation (\ref{eq:alfa_beta_101}) and $\B_r={\rm pre}_r(f) S_{i+2}(f)$. 
It is easy to see that $dist_\sim(\tilde\alpha_r, \B_r)=2$. Moreover, from Lemma \ref{l-alfa-tilde-f-free}, $\tilde\alpha_r$ is $f$-free. Here we prove that $\B_r$ is f-free. 
Indeed, suppose that $f$ occurs in  $\B_r$ starting from position $r_1+1$. 
By the definition of $\B_r$, we have that $\B_r[k]=f[k]$ for $1 \leq k \leq |f|$ and $k \neq r+i,r+i+1$. Moreover, $\B_r[r+k]=f[k]$, for $1 \leq k \leq |f|$ and $k \neq j,j+1$. Now, we have $0=f[i]$ and, since $f$ occurs in $\B_r$ starting from position $r_1+1$, $f[i]=\B_r[r_1+i]$.  Since $\B_r[r+i+1]=1$, it holds $r_1+i \neq r+i+1$. Moreover $r_1+i \neq r+i$ and we can conclude that $\B_r[r_1+i]=f[r_1+i]$ i.e. $0=f[i]=\B_r[r_1+i]=f[r_1+i]$. 
Note that it cannot be $r_1+i=j+1$, since $\B_r[r+j+1]=1$. Now, if $r_1+i \neq j$, we can say $f[r_1+i]=\B_r[r+r_1+i]$. Since $f$ occurs in  $\B_r$ starting from position $r_1+1$, we have $\B_r[r+r_1+i]=f[r+i]$
Altogether, we have $0=f[i]=\B_r[r_1+i]=f[r_1+i]=\B_r[r+r_1+i]=f[r+i]=1$, a contradiction. Therefore, the only possibility is that $r_1+i = j$, i.e. $r_1=2$.
Similar considerations on $f[j]$ show that, if $f$ occurs in  $\B_r$ starting from position $r_1+1$, then $r_1=1$. 
Therefore, if a copy of $f$ occurs in  $\B_r$ starting from position $r_1+1$, then it should be $r_1=2$ and $r_1=1$ and this is impossible. Hence $\B_r$ is $f$-free. 
\\
Moreover, no $f$-free tilde-transformation from $\tilde\alpha_r$ to $\B_r$ exists. The only possible tilde-transformations are $(S_{r+i}, S_{r+i+2})$ and $(S_{r+i+2},S_{r+i})$ and they are not $f$-free.
Hence $(\tilde\alpha_r,\B_r)$ is a pair of tilde-witnesses for $f$.

\smallskip
 \noindent (C5)
 Let $f$ have a $2$-tilde-error overlap ${\$ \choose 0}{00\choose 11}{1\choose \$}$ with shift $r$ and length $2$ and let $(O_i,O_j)$, be a tilde-transformation from ${\rm pre}_{2}(f)=00$ to ${\rm suf}_{2}(f)=11$. 
Note that it must be $(O_i,O_j)=(R_i,R_{i+1})$ with $i=1$. 
 In this case we have
$f=w001=001z$, for some $w,z \in \Sigma^*$. 
Consider the pair  $(\tilde\alpha_r, \tilde\psi_r)$ with $\tilde\alpha_r$ as in Equation (\ref{eq:alfa_beta_tilde}) and $\tilde\psi_r={\rm pre}_{r-1}(f)1 S_{i+1}(f)$. In other words 
$\tilde\alpha_r=w0101z$ and
$\tilde\psi_r=w1010z$.

The pair $(\tilde\alpha_r, \tilde\psi_r)$ is a pair of tilde-witnesses for $f$. Indeed, it is easy to see that $dist_\sim(\tilde\alpha_r, \tilde\psi_r)=2$. Moreover, from Lemma \ref{l-alfa-tilde-f-free}, $\tilde\alpha_r$ is $f$-free and, using techniques similar to the ones used in (C4) for $\B_r$, one can prove that $\tilde\psi_r$ is $f$-free. 
Finally, no $f$-free tilde-transformation from $\tilde\alpha_r$ to $\tilde\psi_r$ exists. The only possible tilde-transformations are $(S_{r-1}, S_{r+1})$ and $(S_{r+1},S_{r-1})$ and they both are not $f$-free.
Remark that, in this case, the pair $(\tilde\alpha_r,\tilde\beta_r)$ of Equation \ref{eq:alfa_beta_tilde} is not a pair of tilde-witnesses because ${\rm dist}_\sim(\tilde\alpha_r,\tilde\beta_r)=1$. \end{proof}

The following is an example of construction of tilde-witnesses when $f$ has a $2$-tilde-error overlap with two adjacent swaps.

\begin{example}\label{e-2swap} 
The word $f=10010110$ has a $2$-tilde-error overlap ${\$\choose 1}{1001 \choose 0110}{0\choose \$}$ of shift $r=4$ and swaps in positions $1,3$. By Proposition 
\ref{p-2-sufficiente} case (C2), the pair $(\tilde{\alpha}_4, \tilde{\beta}_4)$ with $\tilde{\alpha}_4=100101010110$ and $\tilde{\beta}_4=100110100110$ is a pair of tilde-witnesses. Then $f$ is tilde-non-isometric. 
Note that $f$ is Ham-isometric.
\end{example}

Propositions \ref{p-1-sufficiente}, \ref{p-skukkiati-C1} and \ref{p-2-sufficiente} together, prove the sufficient condition for a word to be tilde-non-isometric.
This concludes the proof of Theorem \ref{t:characterization}.

\section{Conclusion}

In this paper the problem of fully characterizing isometric words has been solved in the setting of the edit distance that allows swap and replacement operations. Compared with the setting of Hamming distance, the problem turned out to be much more complicated, since swap operation gives rise to transformations that differ from one another not only because of the execution order of operations, as in the case of Hamming, but also for the kind of operations executed. This fact makes things harder to manage and the characterization based on overlaps needs to specify some special cases that can occur.

Recall that the original reason for the introduction of isometric words was their connection with isometric subgraphs of the hypercube $Q_n$ (cf.\cite{IlicK12,Hsu93,KlavzarS12}).
In fact, if all the nodes containing a given isometric word as factor are removed then the distances between the remaining nodes do not change. These subgraph of hypercubes are called generalized Fibonacci cubes and are subgraphs isometric to hypercubes in the graph-theoretic sense. 

Over the years, many variations of the hypercube have been introduced in order to improve some of its features. For example, folded hypercubes (cf. \cite{Ama91}) and ehnanced hypercubes (cf. \cite{TzengW91}) have been defined by adding some edges to the hypercube and present many advantages for some topological features.
In \cite{ACFGMMDcfs23}, the definition of tilde-hypercube is introduced by exploiting the tilde-distance. Adding the swap operation increases the number of connections in the hypercube (since some nodes having distance $2$ in the hypercube, using swap operation,  have distance $1$ in the tilde-hypercube). Tilde-isometric words then are used to introduce the generalized tilde-Fibonacci cubes that are isometric subgraphs of the tilde-hypercube and revealed important properties. 
 
 \smallskip
In conclusion, the swap and mismatch distance we adopted in this paper opens up new scenarios and presents interesting new situations that surely deserve further investigation. The definition and characterization of tilde-isometric words, besides having an important combinatorial value can serve as base for strings and graph algorithmic developments.

\bibliographystyle{ws-ijfcs}
\bibliography{references}

 \end{document}